\documentclass[11pt]{article}

\usepackage[T1]{fontenc}
\usepackage[utf8]{inputenc}
\usepackage{lmodern}
\usepackage{microtype}
\usepackage{geometry}
\usepackage{graphicx}
\usepackage{amsmath,amssymb,amsfonts}
\usepackage{amsthm}
\usepackage{booktabs}
\usepackage{xcolor}
\usepackage[numbers,sort&compress]{natbib}
\usepackage[colorlinks=true,allcolors=blue]{hyperref}
\hypersetup{
  pdftitle={Implicit Evaluation Under Minimal Information: Hierarchical
            Component Selection from One-Bit Feedback},
  pdfauthor={Joss Armstrong},
  pdfkeywords={Information minimality; Implicit evaluation; Hierarchical
               selection; One-bit feedback; Stochastic approximation},
  pdfcreator={},
  pdfproducer={}
}
\usepackage[title]{appendix}

\graphicspath{{figures/}}

\newtheorem{theorem}{Theorem}
\newtheorem{corollary}[theorem]{Corollary}

\newtheorem{lemma}[theorem]{Lemma}
\theoremstyle{definition}
\newtheorem{definition}[theorem]{Definition}
\newtheorem{assumption}[theorem]{Assumption}
\theoremstyle{remark}
\newtheorem{remark}[theorem]{Remark}

\begin{document}
% ============================================================

\title{Implicit Evaluation Under Minimal Information:\\
Hierarchical Component Selection from One-Bit Feedback}

\author{Joss Armstrong\\
Ericsson Research, Ireland\\
\small ORCID: 0009-0009-3462-9679}
\date{}

\maketitle

\begin{abstract}
A selector that allocates work across opaque components must evaluate
them without being told how they performed.  We study the least
communication that suffices.  Each parent maintains an allocation
vector over its children and updates it from outcomes by proportional
redistribution.  Each child reads the sign of the change in its own
allocation, one bit per round it is selected, and treats that as its
evaluation.  No evaluation message crosses the parent--child boundary.
The setting imposes four constraints, namely that component quality is
private (modular opacity), that each selector observes only the outcome
of its chosen pathway (bottlenecked access), that each selector updates
from local signals alone, and that a child observes the allocation its
parent assigns it exactly.
We prove seven main results.
(1)~The update preserves the simplex, with strict positivity, under
arbitrary outcome sequences.
(2)~Conditional on being selected, a child's sign is exactly the root
outcome at every depth, so the sign is a sufficient statistic for the
outcome and no richer signal is needed.  The conditioning is
necessary, since an unselected sibling's allocation moves the other
way.
(3)~Under the interiority condition a single selector has a unique
interior equilibrium.  For $N{=}2$ children it is exact and
closed-form, the dynamics are stable, and the stochastic recursion
converges almost surely under decreasing step sizes.  For general~$N$
an equi-ratio condition yields an explicit affine equilibrium.
(4)~The mean flow linearised at the interior equilibrium has real,
strictly negative tangent eigenvalues with explicit spectral bounds,
for all $N \geq 2$.
(5)~The mean flow converges globally.  An explicit potential, up to
normalisation the Kullback-Leibler divergence between the failure
distribution and the feasible slack distribution, decreases at a rate
equal to the variance of the equi-ratio statistic.  Under interiority
the limit is the interior equilibrium.  Without it, the rule drops
exactly the children below the surviving support's threshold.
(6)~The update law composes along a single active path.  Ancestors
select a node's active-round subsequence but do not alter the content
of a realised signal.  No joint convergence theorem for a hierarchy
whose levels adapt simultaneously is claimed.
(7)~The one-bit interface admits exactly four deterministic memoryless
distortions.  At a single selector with fixed child qualities the
allocation response is quality-monotone, so passing the bit on
unchanged delivers strictly higher mean quality upward than either
negation or constant transmission, by an amount proportional to the
dispersion of the child qualities.
Numerical illustrations on synthetic hierarchies with up to 16,384
leaves and three real-world datasets (3.4 billion active node
comparisons, zero signal mismatches) are consistent with the
theoretical predictions.
\end{abstract}

\noindent\textbf{Keywords:} Information minimality, Implicit evaluation, Hierarchical selection, One-bit feedback, Stochastic approximation

%% ============================================================
\section{Introduction}
\label{sec:intro}
%% ============================================================

Hierarchical component selection is a coordination problem. A network
orchestrator selects a service provider, which selects an algorithm, which
selects a configuration. A platform selects a vendor, which selects a model,
which selects a feature set. In each case, only the top-level
decision-maker observes the outcome. Intermediate nodes must evaluate their
children without access to this information.

The standard engineering approach propagates evaluation signals explicitly:
the root broadcasts its assessment, or each level reports outcomes upward,
and a central coordinator attributes credit. This requires trust, a common
evaluation language, and a communication channel that crosses organisational
or vendor boundaries. When these are unavailable, the coordination problem
has no solution within the explicit-signal paradigm.

We propose a mechanism in which evaluation signals are not communicated but
\emph{inferred}. Each parent maintains a weight vector over its children,
updated from outcomes via proportional redistribution. Each child observes
the change in its weight and reads the sign as a binary evaluation,
weight up meaning positive and weight down meaning negative. This is one bit of
information per round, derived from a quantity already observable (the
child's selection probability or traffic share). No explicit messages cross
any boundary.

The question the paper answers is informational, in the tradition of
Hurwicz~\cite{Hurwicz1960}: what is the least communication that
suffices for a hierarchy to coordinate on quality?  The answer is one
bit per active round, and that bit is zero additional evaluation
communication: it is the sign of a change in a quantity every
component already observes, its own allocation.  Everything that
follows is downstream of taking this answer seriously.  An interior
allocation forms (Theorem~\ref{thm:price}) and its mean
flow stabilises, both locally (Theorem~\ref{thm:stability}) and globally
(Theorem~\ref{thm:G7}).
The levels compose without signal degradation
(Theorem~\ref{thm:hierarchy}), and at a single selector the allocation
response rewards passing the bit on unchanged over distorting it
(Corollary~\ref{lem:qual-ordering}).  The redistribution rule that carries the
bit is deliberately simple.  The content of the paper is how much
structure the minimal signal supports.

The result is an information-theoretic one.  The sign of the weight
change is a sufficient statistic for the root outcome at every depth
(Corollary~\ref{cor:sufficiency}): conditional on selection, the
magnitude carries no additional information.  This is not a
coding-theorem or a channel-capacity result, but a
sufficient-statistic identification in the spirit of
Blackwell~\cite{Blackwell1951} and the information-based mechanism
design programme of Hurwicz~\cite{Hurwicz1960}.  The mechanism
operates at the information floor of hierarchical coordination: one
bit per active round, recovered locally, with zero evaluation
communication across any boundary.  The equilibrium and convergence
structure that follows is downstream of this identification.

The information structure C1--C4 is not an idealisation but the default
at organisational boundaries. A platform routing traffic across
third-party API providers observes whether its own request succeeded,
and each provider observes its traffic share. Neither party inspects
the other's internals, and no shared evaluation schema exists. The same
holds for an orchestrator selecting among vendor models, or a
federation delegating to member services. In each case the only
quantity that reliably crosses the boundary is the allocation itself,
and the mechanism's premise is that its sign changes already carry the
evaluation.

\paragraph{Contributions.}
The contributions form a single chain.  The redistribution rule keeps
the allocation exact (1), its sign is a lossless one-bit signal (2), and
the bit suffices for an interior allocation to form (3), to stabilise
locally (4) and globally (5), and to compose along the active path (6).
The allocation response to a distorted bit is then characterised at a
single selector (7).  The experiments (8) provide
implementation checks and numerical illustrations.

\begin{enumerate}
    \item \textbf{Simplex invariance} (Theorem~\ref{thm:integrity}):
    proportional redistribution preserves the weight simplex algebraically,
    including under adversarial outcome sequences.  No projection or
    renormalisation is needed.

    \item \textbf{Signal fidelity} (Theorem~\ref{thm:signal}):
    conditional on being selected, a child's binary signal equals the root
    outcome at every depth.  The sign of the weight change is a sufficient
    statistic for the outcome.  An unselected sibling's weight moves the
    other way, so the decoding step is gated on selection.

    \item \textbf{Equilibrium allocation}
    (Theorem~\ref{thm:price}): the
    mechanism has a unique interior equilibrium.  For $N{=}2$
    children, the equilibrium is exact and closed-form with globally
    stable dynamics and almost-sure stochastic convergence under
    decreasing step sizes (Theorem~\ref{thm:stochN2}).  For
    general~$N$, under the interiority
    condition~\eqref{eq:interiority}, an equi-ratio condition gives an
    explicit affine equilibrium (Corollary~\ref{cor:explicit}).

    \item \textbf{Local linear rate}
    (Theorem~\ref{thm:stability}): the mean flow linearised at the
    unique interior equilibrium has real, strictly negative tangent
    eigenvalues with explicit spectral bounds, for all $N \geq 2$.

    \item \textbf{Global convergence} (Theorem~\ref{thm:G7}): the
    mean flow converges from every interior state to the
    unique minimiser of an explicit strictly convex potential,
    equivalently the feasible slack distribution closest to the
    failure distribution in Kullback-Leibler divergence
    (Lemma~\ref{lem:G3}).
    Under the interiority condition the limit is the
    interior equilibrium; without it, the rule drops exactly the
    children below the surviving support's threshold and allocates to
    the rest by the same affine formula.  The $N{=}2$ stochastic
    recursion is covered unconditionally
    (Theorem~\ref{thm:stochN2}).

    \item \textbf{Marginal composition}
    (Theorem~\ref{thm:hierarchy}): the hierarchy's levels are
    update-local along the single selected path.  Conditional on an
    active round, each node applies the standalone local transition
    rule (Theorem~\ref{thm:price}) on a subsequence selected by its
    ancestors.  This is not statistical independence of node states,
    clocks, or outcomes.

    \item \textbf{Response to signal distortion}
    (Corollary~\ref{lem:qual-ordering}): the one-bit interface admits
    exactly four deterministic memoryless distortions.  At a single
    selector with fixed child qualities, the allocation response is
    quality-monotone (Lemma~\ref{lem:global-monotone}), so passing the
    bit on unchanged delivers strictly higher mean quality upward than
    either constant transmission or negation, by an amount proportional
    to the dispersion of the child qualities
    (Corollary~\ref{cor:dispersion}).  This is a comparative-statics
    property of the allocation response.  No equilibrium of a
    multi-level distortion game is claimed
    (Remark~\ref{rem:distortion-scope}).

    \item Numerical implementation checks on synthetic hierarchies (up to
    16,384 leaves) and three real-world domains (3.4 billion active node
    comparisons, zero signal mismatches), including a theorem-faithful
    paired replay in which delta and explicit modes have identical state
    trajectories
    (Section~\ref{sec:experiments}).
\end{enumerate}

\paragraph{What is not claimed.}
Three boundaries are worth fixing before the results, because the
combination of a single-selector convergence theorem with a
composition theorem invites a stronger reading than the results
support.  \emph{First}, convergence is proved for a single selector
with fixed child outcome probabilities.  No theorem here states that a
hierarchy whose levels adapt at the same time converges jointly, and
the composition result (Theorem~\ref{thm:hierarchy}) is a statement
about the local transition law along the selected path, not about joint
dynamics.  \emph{Second}, the redistribution rule is not shown to be
the only update that carries a sign-separating bit.  It is adopted
because it keeps the simplex exact, admits a closed-form equilibrium
and descends an explicit potential, and Section~\ref{sec:conclusion}
lists the optimality question as open.  \emph{Third}, the paper assumes
the protocol is in place (Assumption~\ref{asm:adoption}) and does not
derive adoption from any selector's interest.

\paragraph{Relation to multiplicative weights.}
The proportional redistribution rule shares algebraic structure with
multiplicative weights (MW)~\cite{Littlestone1994,Freund1997}.  The
contribution is not the update rule but the information model and its
consequences.  Three differences are structural, not quantitative.
\emph{First}, MW/Hedge require a full loss vector per round ($N$~values);
EXP3 requires importance-weighted loss estimates for the selected arm
(1~real value, requiring knowledge of selection probabilities).  This
mechanism requires only the \emph{sign} of a weight change: a single
bit that the child can compute from a quantity it already observes (its
selection probability), with no importance weighting and no
communication from the principal.  \emph{Second},
Theorem~\ref{thm:signal} and Corollary~\ref{cor:sufficiency} prove this
bit is a sufficient statistic for
the root outcome at every depth.  The intermediary receives no
explicit reward or evaluation message across the organisational
boundary; it reconstructs the same binary outcome information from
the sign of an allocation quantity it already observes locally.
\emph{Third}, MW analyses provide per-instance regret bounds
against an adversary; we characterise the emergent equilibrium of a
multi-level system (Theorem~\ref{thm:price}) and prove that it composes
across levels without signal degradation
(Theorem~\ref{thm:hierarchy}).  The questions are different.  MW asks
how much regret is incurred.  We ask what operating point emerges, and
why it is unique.

Table~\ref{tab:learning-rules} places the mechanism against the
neighbouring update rules on the two axes that matter here, what the
learner observes and what has to be sent to it.  The "Message" column
is what must cross the boundary to the learner each round.  The last column
records what each line of work proves, which is why the comparison is
not a contest.  The rules in the first four rows are analysed for
regret or for a continuous-time limit, and we analyse an equilibrium
and its composition across levels.

\begin{table}[t]
    \caption{Neighbouring update rules by information structure.  The
    distinguishing entry is the third column.  Every other rule
    requires something to be delivered to the learner each round.  Here
    the learner reads a quantity it already holds, namely the
    allocation its parent assigns it, and no evaluation message crosses
    the boundary.}
    \label{tab:learning-rules}
    \centering
    \small
    \begin{tabular}{@{}p{0.22\textwidth}p{0.26\textwidth}p{0.11\textwidth}p{0.25\textwidth}@{}}
        \toprule
        Rule & Observed per round & Message & Result proved \\
        \midrule
        Hedge / MW~\cite{Littlestone1994,Freund1997}
          & loss vector ($N$ reals) & $N$ reals & regret \\[2pt]
        EXP3~\cite{Auer2002b}
          & IW loss, selected arm & 1 real & regret \\[2pt]
        Bernoulli bandit~\cite{Bubeck2012}
          & binary reward & 1 bit & regret \\[2pt]
        Cross~\cite{Cross1973}; B\"orgers--Sarin~\cite{BorgersSarin1997}
          & own realised payoff & 1 real & replicator limit \\
        \midrule
        \textbf{This work}
          & sign of own allocation change & \textbf{none}
          & equilibrium, composition \\
        \bottomrule
    \end{tabular}
\end{table}

The third column is the claim.  A Bernoulli bandit also learns from one
bit per round, so the bit count alone separates nothing, and we do not
claim it does.  What separates the mechanism is that the bit is not
sent.  It is recovered from a number the child observes anyway under
C4, which is why the mechanism survives at boundaries where no
evaluation channel exists.

%% ============================================================
\section{Related Work}
\label{sec:related}
%% ============================================================

\paragraph{Mechanism design and information.}
Hurwicz~\cite{Hurwicz1960} established that mechanism design is
fundamentally about information: what signals agents observe, and what
actions those signals can support. Myerson~\cite{Myerson1981} characterised
optimal auctions under incomplete information. Nisan and
Ronen~\cite{NisanRonen2001} initiated the study of algorithmic mechanism
design, connecting computational constraints to incentive properties.
Our setting differs from classical mechanism design in that the
baseline allocation dynamics do not require strategic behaviour.  The
equilibrium of Theorem~\ref{thm:price} is a rest point of the
redistribution dynamics, not a Nash equilibrium.  Section~\ref{sec:distortion}
then asks how the allocation responds when a selector distorts the
one-bit signal it passes on.  The closest economic analogy is market
microstructure: prices form through repeated exchange under information
constraints, as studied by Milgrom~\cite{Milgrom2004}.
Hayek~\cite{Hayek1945} argued that prices aggregate dispersed knowledge
that no central planner can collect. Our mechanism operationalises this
insight: the weight vector aggregates quality information across a
hierarchy through local exchange, with no node possessing global knowledge.

\paragraph{Principal-agent models.}
Laffont and Tirole~\cite{Laffont1993} study hierarchical incentive
problems with transfers and contract design. Our mechanism uses no
transfers: evaluation is implicit in the weight change. Bergemann and
V\"alim\"aki~\cite{BergemannValimaki2019} survey dynamic mechanism
design, where the designer acquires information about agents over time,
and give the dynamic pivot mechanism~\cite{BergemannValimaki2010} as a
canonical transfer-based instance.
Our setting inverts this: each node in the hierarchy evaluates its children through
a 1-bit channel, with no designer involved.

\paragraph{Online learning.}
The multi-armed bandit literature~\cite{Bubeck2012} studies sequential
selection under uncertainty.  UCB~\cite{Auer2002} requires per-arm reward
observations, EXP3~\cite{Auer2002b} requires importance-weighted loss
estimates for the selected arm, and Hedge~\cite{Freund1997} requires
full loss vectors.  A Bernoulli bandit observes a binary reward on the
selected arm, so the information content per round is one bit, as here.
The difference is architectural: in this mechanism the intermediary
receives no explicit reward message across the organisational boundary;
it reconstructs the same binary outcome from the sign of an allocation
quantity it already observes (its selection probability).  There is no
importance weighting and no separate evaluation communication channel.

\paragraph{Hierarchical credit assignment.}
Feudal reinforcement learning~\cite{Dayan1992} introduced managerial
hierarchies, and FeUdal Networks~\cite{Vezhnevets2017} implemented this
with gradient flow, but both propagate explicit signals. Samejima et
al.~\cite{Samejima2003} extract credit from changes in gating weights,
the closest prior work to our approach, but require value function
estimates and centralised training. COMA~\cite{Foerster2018} and
LICA~\cite{Zhou2020} address flat multi-agent credit assignment.
Mixture-of-experts architectures~\cite{Jacobs1991,Shazeer2017} learn
gating through backpropagation, requiring differentiable loss.

\paragraph{Information aggregation and prediction markets.}
Prediction markets aggregate dispersed beliefs into prices through
explicit reports, bets placed by
traders~\cite{Hanson2003,ChenPennock2007}, and proper scoring rules
evaluate explicit forecasts against outcomes~\cite{Gneiting2007}.
Our setting has neither reports nor forecasts.  The ``traders'' are
the update rule itself, the ``market price'' is the weight vector,
and the only elicitation is a one-bit signal that each child infers
from its own allocation.

\paragraph{Bandits in trees.}
$\mathcal{X}$-armed bandits~\cite{Bubeck2011} partition the action space
hierarchically, with a single learner navigating the tree top-down. Our
setting differs in two respects: each node is an independent decision
problem (not a partition cell), and feedback at depth~$d$ is a 1-bit
signal derived from the weight change, not a reward observation. The
hierarchical composition result (Theorem~\ref{thm:hierarchy}) shows that
despite this weaker feedback, each level reduces to a single-node
problem on its active-round subsequence.  This pathwise result does
not attribute one aggregate outcome among several simultaneously
contributing children.

\paragraph{Replicator dynamics and evolutionary game theory.}
The replicator equation~\cite{TaylorJonker1978,Weibull1995}
$\dot{x}_i = x_i(f_i - \bar{f})$ governs frequency evolution in
populations where fitness depends on the current strategy mix.
Our expected drift~\eqref{eq:drift-general} shares the $w_i(\cdots)$
prefactor that ensures the simplex is invariant, but differs in the
bracketed term.  In replicator dynamics the payoff generally depends
on the full population state through a payoff matrix, and limit
cycles can occur with three or more strategies.  Our drift is a
replicator system of a special kind: the cost
$h_i = (1-p_i)/(1-w_i)$ depends only on the child's own weight, a
congestion structure.  Such systems admit a potential, and
Theorem~\ref{thm:G7} exploits the explicit potential~$\Phi$ to prove
global convergence by a one-line variance identity, the simplest
instance of the Lyapunov programme of Hofbauer and
Sigmund~\cite{HofbauerSigmund1998}.  The local eigenvalue analysis
of Theorem~\ref{thm:stability} supplies the rate; the potential
supplies globality.

Relative to the potential-game and congestion literatures, the
contribution is the conjunction rather than any single ingredient. The
congestion-form cost $h_i$ depends only on the child's own weight and
arises from the redistribution rule itself rather than a modelling
choice. The observation channel is a single bit reconstructed from the
sign of a locally observed allocation change, with no explicit reward
or evaluation message crossing the organisational boundary. And the
levels compose exactly at the local transition-law level along the
selected path (Theorem~\ref{thm:hierarchy}). Each ingredient is
classical in isolation. Their conjunction, and the fact that the
minimal channel loses nothing (Corollary~\ref{cor:sufficiency}), is
what is new.

\paragraph{Stochastic reinforcement learning rules.}
The update law belongs to the family of linear reinforcement rules
introduced by Cross~\cite{Cross1973} to model economic choice, in which
the selected alternative's probability is adjusted in proportion to the
realised outcome and the remaining mass is rescaled so that the vector
stays on the simplex.  B\"orgers and Sarin~\cite{BorgersSarin1997}
showed that the expected motion of such a rule is the replicator
equation, which places the algebra used here inside a well-understood
class.  The update rule itself is not new.  What differs is
what the rule is asked to do.  In the reinforcement-learning tradition
the agent observes its own realised payoff, and the rule converts that
payoff into a probability revision.  Here no payoff and no evaluation
message reaches the node, and the rule is executed by the parent, so the
revision is the only thing the child observes.  The results concern what
a child can recover from that revision
(Theorem~\ref{thm:signal}), where the induced dynamics settle
(Theorems~\ref{thm:price} and~\ref{thm:G7}), and how the levels compose
(Theorem~\ref{thm:hierarchy}).

\paragraph{Price adjustment and t\^{a}tonnement.}
Classical price adjustment processes~\cite{Arrow1958,Samuelson1941}
model how prices converge to Walrasian equilibrium through iterative
excess-demand corrections.  In the simplest form,
$\dot{p}_i = z_i(p)$ where $z_i$ is excess demand for good~$i$.
Scarf~\cite{Scarf1960} showed that t\^{a}tonnement can be globally
unstable even with standard preferences, motivating the literature on
conditions for stability~\cite{ArrowHurwicz1958}.  Our mechanism can
be read as a price adjustment process in which the ``excess demand''
for child~$i$ is the gap between quality and current weight, filtered
through a 1-bit channel.  The key structural difference is that our
adjustment is multiplicative (proportional redistribution), not
additive, and the information available to the adjustment process is
minimal: a single binary signal per round.  Despite this, the
equilibrium is unique and globally attracting (Theorems~\ref{thm:price}
and~\ref{thm:G7}), with a local exponential rate
(Theorem~\ref{thm:stability}), in contrast to the fragility of classical
t\^{a}tonnement.  The contrast is structural, not accidental: the
adjustment descends a strictly convex potential
(Theorem~\ref{thm:G7}), so Scarf-type instability cannot occur in
this class.

\paragraph{Population games.}
Sandholm~\cite{Sandholm2010} develops a general framework for
population games in which agents revise strategies based on payoff
comparisons.  The revision protocol determines the aggregate dynamics.
Proportional imitation~\cite{Schlag1998} is perhaps the closest
protocol to our update rule: agents switch to a sampled strategy with
probability proportional to its payoff advantage.  In our setting,
there is no population of agents, and the ``revision'' is performed by
the mechanism itself.  The connection is structural: both proportional
imitation and proportional redistribution produce simplex-preserving
dynamics with interior rest points determined by payoff (quality)
ratios.  The distinction is that our mechanism operates under
bottlenecked access (C2), where the payoff of unchosen alternatives is
never observed, not even by sampling.

%% ============================================================
\section{Model}
\label{sec:model}
%% ============================================================

\subsection{Hierarchy, Inputs, and Information Constraints}

A hierarchy is a rooted tree $\mathcal{T} = (V, E)$ where each node
$v \in V$ has type $\tau(v) \in \{\text{Selector}, \text{Leaf}\}$.
Leaves have no children. Each selector $v$ has $N_v \geq 2$ children
$\mathrm{ch}(v) = \{c_1, \ldots, c_{N_v}\}$. The root $r$ is the unique
node with no parent. The depth of the tree is $D$.

Rounds are indexed by $t$. At each round the environment draws an
input $x_t$ independently from a fixed distribution $\mathcal{D}$ on
an input space $\mathcal{X}$. The space $\mathcal{X}$ is arbitrary
and enters only through the quality functions and the context maps.

Each leaf $\ell$ has a quality function
$q_\ell : \mathcal{X} \to [0,1]$, unknown to all other nodes.

The mechanism operates under three constraints that define a setting
where the principal (selector) cannot elicit reports from agents
(components) and must infer quality from observed outcomes:
\begin{description}
    \item[C1 (Modular opacity).] Each component's quality and internal
    state are private information, not directly observable or inspectable
    by any selector.
    \item[C2 (Bottlenecked access).] Each selector receives feedback only
    from the chosen pathway; unchosen alternatives remain counterfactual.
    Exactly one child is selected at each active selector.  The model does
    not cover an aggregate outcome jointly produced by several sibling
    contributors, for which their individual credits are not identified
    by one pathwise bit.
    \item[C3 (Local updating).] Each selector maintains and updates its
    weights using only locally available signals: its own weight vector,
    the identity of its selected child, and the observed outcome.
    \item[C4 (Exact allocation observability).] Each non-root selector
    observes the allocation weight its parent assigns to it, exactly and
    once per round.  This is the only quantity the mechanism requires to
    cross the parent--child boundary.  It is a control-plane value, held
    and published by the parent, rather than a statistic estimated from
    realised traffic.  A node that observed only a noisy realised traffic
    count could not in general recover the per-round sign that
    Theorem~\ref{thm:signal} uses, and the results below do not cover
    that case.
\end{description}
Under C1--C4, the mechanism has no reporting channel: components
do not and cannot communicate quality. A consequence developed
in Corollary~\ref{cor:sufficiency} is that a minimal lossless
evaluation signal under these constraints is binary (1~bit per active
round).

Each selector $v$ has a context function
$\kappa_v : \mathcal{X} \to \mathcal{K}_v$ mapping input features to a
finite set of operating contexts. Context functions are locally determined:
each node defines its own partition of the input space independently.

\subsection{Allocations, Selection, and Outcomes}

Each selector $v$ maintains, for each context $k \in \mathcal{K}_v$, an
allocation vector
$\mathbf{w}_v^k = (w_{v,1}^k, \ldots, w_{v,N_v}^k) \in \Delta^{N_v-1}$,
initialised uniformly: $w_{v,i}^k(0) = 1/N_v$.

Here $w_{v,i}$ is the endogenous share of traffic allocated to
child~$i$.  We use the words \emph{allocation}, \emph{weight} and
\emph{share} interchangeably for this vector, and the mathematics uses
only its simplex structure.

The vector also admits an economic reading, under which it is a price
vector and the update is an exchange that clears.  That reading is
developed in Section~\ref{sec:discussion} and is used nowhere in the
results.  We keep it out of the statements and proofs deliberately, so
that nothing below depends on it.

\begin{definition}[Active path]\label{def:active-path}
The active path at round $t$ is the root-to-leaf path generated
top-down. The root selects a child by~\eqref{eq:selection}, that
child selects one of its own children by the same rule, and so on
until a leaf is reached. A node is active at round $t$ if it lies on
the active path.  Write $A_{v,t}$ for the event that $v$ is active at
round~$t$.  A node observes $A_{v,t}$ locally, because it is invoked by
its parent exactly when it is selected.
\end{definition}

At each round $t$, the system processes input $x_t$. Each selector
$v$ on the active path (Definition~\ref{def:active-path}) computes
$k = \kappa_v(x_t)$ and selects child $c_t$ with probability
\begin{equation}\label{eq:selection}
    P(c_t = c_i) = w_{v,i}^k(t).
\end{equation}

Only the root observes the round outcome:
$o_t = \mathbf{1}[\tilde{q}(x_t) > \theta]$, where $\tilde{q}$ is the
noisy quality of the selected leaf. No other node observes $o_t$ or any
function of it.

\subsection{Proportional Redistribution and the Weight-Delta Protocol}

After the root observes $o_t$, each selector $v$ on the active path updates
weights for the selected child $c_t$ in context $k = \kappa_v(x_t)$.

Positive signal ($o_t = 1$ for root, or $s_v = 1$ for non-root):
\begin{align}
    w_{v,c_t}^k(t+1) &= (1-\eta) \, w_{v,c_t}^k(t) + \eta, \label{eq:pos-selected}\\
    w_{v,j}^k(t+1) &= (1-\eta) \, w_{v,j}^k(t), \quad j \neq c_t. \label{eq:pos-sibling}
\end{align}

Negative signal ($o_t = 0$ for root, or $s_v = 0$ for non-root):
\begin{align}
    w_{v,c_t}^k(t+1) &= (1-\eta) \, w_{v,c_t}^k(t), \label{eq:neg-selected}\\
    w_{v,j}^k(t+1) &= w_{v,j}^k(t) \cdot
    \frac{1 - w_{v,c_t}^k(t) + \eta \, w_{v,c_t}^k(t)}{1 - w_{v,c_t}^k(t)},
    \quad j \neq c_t. \label{eq:neg-sibling}
\end{align}

\begin{definition}[Weight-delta protocol]\label{def:protocol}
The signal propagation protocol proceeds as follows.
\begin{enumerate}
    \item The root updates child weights using $o_t$.
    \item Each non-root selector $v$ at depth $d \geq 2$ checks whether
    it is active.  If $A_{v,t}$ does not hold, $v$ takes no action in
    round~$t$ and leaves $\mathbf{w}_v(t{+}1) = \mathbf{w}_v(t)$.
    \item If $A_{v,t}$ holds, $v$ observes the weight change
    $\delta_v(t) = w_{p(v),v}^{k_{p(v)}}(t+1) - w_{p(v),v}^{k_{p(v)}}(t)$
    assigned to it by its parent.
    \item Node $v$ derives a binary signal $s_v(t) = \mathbf{1}[\delta_v(t) > 0]$.
    \item Node $v$ uses $s_v(t)$ to update its own children via
    (\ref{eq:pos-selected})--(\ref{eq:neg-sibling}).
\end{enumerate}
\end{definition}
The only information crossing the parent-child boundary is one
scalar (the weight), from which the child computes a sign.  The
outcome~$o_t$ itself is never communicated.

\begin{assumption}[Protocol adoption]\label{asm:adoption}
Every selector in the hierarchy runs Definition~\ref{def:protocol}.
Adoption is imposed by whoever operates the hierarchy, in the way a
platform fixes the interface its vendors must implement.  This paper
does not model the decision to adopt, and it establishes no condition
under which a selector would prefer this protocol to one of its own.
The results describe a hierarchy in which the protocol is already in
place.
\end{assumption}

\begin{remark}[The activity gate is load-bearing]\label{rem:activity-gate}
Steps 2--5 are gated on $A_{v,t}$, and the gate cannot be dropped.
Under~\eqref{eq:pos-sibling} and~\eqref{eq:neg-sibling} an active
parent changes the weight of \emph{every} child, not only the selected
one, and the unselected siblings move in the direction opposite to the
selected child.  An unselected sibling that executed step~5 would
therefore decode $1 - o_t$ rather than $o_t$.  The one-bit channel is
conditional on selection, which is an event the node already observes,
so the gate requires no additional information.
\end{remark}

\subsection{A two-level example}
\label{sec:example}

The mechanism is easier to read off a small instance than off the
update equations.  Let the root~$r$ have two selector children $v_1$
and~$v_2$.  Let $v_1$ have leaf children $a$ and~$b$, and let $v_2$
have leaf children $c$ and~$d$, whose allocations we do not track
because $v_2$ is inactive in both rounds below.  Take
$\eta = 1/10$, a single context per selector, and the uniform
initialisation $w = (1/2, 1/2)$ everywhere.

Table~\ref{tab:example} runs two rounds.  All values are exact.

\begin{table}[t]
    \caption{Two rounds on a two-level hierarchy, $\eta = 1/10$.
    Allocations are shown after the round's updates.  The root sees
    the outcome.  Node~$v_1$ never receives it, and recovers it from
    the sign of its own allocation change.  Node~$v_2$ is inactive,
    and the sign of \emph{its} allocation change is the opposite of
    the outcome in both rounds.}
    \label{tab:example}
    \centering
    \begin{tabular}{lrrr}
        \toprule
        & Initial & Round 1 & Round 2 \\
        \midrule
        Outcome $o_t$              & --    & $1$      & $0$      \\
        Active path                & --    & $r\!\to\!v_1\!\to\!a$
                                           & $r\!\to\!v_1\!\to\!b$ \\
        \midrule
        $w_{r,v_1}$                & 0.500 & 0.550    & 0.495    \\
        $w_{r,v_2}$                & 0.500 & 0.450    & 0.505    \\
        \midrule
        $\delta_{v_1}$             & --    & $+0.050$ & $-0.055$ \\
        $s_{v_1}$ (decoded)        & --    & $1$      & $0$      \\
        $\delta_{v_2}$             & --    & $-0.050$ & $+0.055$ \\
        \midrule
        $w_{v_1,a}$                & 0.500 & 0.550    & 0.595    \\
        $w_{v_1,b}$                & 0.500 & 0.450    & 0.405    \\
        \bottomrule
    \end{tabular}
\end{table}

In round~1 the root selects $v_1$, which selects~$a$, and the outcome
is~$1$.  The root applies~\eqref{eq:pos-selected}
and~\eqref{eq:pos-sibling}, so $w_{r,v_1}$ rises to $0.550$ and
$w_{r,v_2}$ falls to $0.450$.  Node~$v_1$ is told nothing.  It reads
its own allocation, sees $\delta_{v_1} = +0.050$, and decodes
$s_{v_1} = 1$, which is the outcome.  It then runs the same update on
its own children, raising $w_{v_1,a}$ to $0.550$.  This is
Theorem~\ref{thm:signal} on one instance.

In round~2 the root selects $v_1$ again, which selects~$b$, and the
outcome is~$0$.  The root applies~\eqref{eq:neg-selected}
and~\eqref{eq:neg-sibling}.  Now $\delta_{v_1} = -0.055$ and $v_1$
decodes $s_{v_1} = 0$, again the outcome.  Note that $v_1$'s own
allocation fell while it raised $w_{v_1,a}$ in the previous round.
The two are independent quantities, one assigned by the parent and one
maintained by~$v_1$.

The rows for $v_2$ are the reason the protocol is gated on activity.
Node~$v_2$ is not selected in either round, so it runs no update of its
own.  Its allocation inside the root's vector still moves, because
proportional redistribution touches every child.  In round~1 it moves
by $-0.050$ while the outcome is~$1$, and in round~2 it moves by
$+0.055$ while the outcome is~$0$.  A node that decoded without
checking whether it had been selected would read the negation of the
outcome on both rounds.  This is the failure that
Remark~\ref{rem:activity-gate} rules out, and
Definition~\ref{def:protocol} rules it out by testing $A_{v,t}$ before
decoding.

Two features of the example are general.  Every allocation vector sums
to one at every step, with no renormalisation
(Theorem~\ref{thm:integrity}).  And the only quantity that crossed a
boundary was a number the child could already see, its own share.

%% ============================================================
\section{Structural Properties}
\label{sec:structural}
%% ============================================================

Throughout this section we consider the pure mechanism
(no exploration parameter, no weight floor). We suppress the context
superscript~$k$ for readability; all results hold per-context.  The
algebra is over the real numbers.  A finite-precision implementation
must preserve the nonzero sign of the selected child's change to
instantiate Theorem~\ref{thm:signal}.

\paragraph{Notation.}
A selector has $N \geq 2$ children with qualities
$p_1 \geq p_2 \geq \cdots \geq p_N$, where
$p_i = P(o_t = 1 \mid \text{child } i \text{ selected})$.
Write $p^* = p_1$, $\Delta = p_1 - p_2 > 0$,
$\alpha = \Delta + 2(1 - p^*)$, and $\eta \in (0,1)$ for the adjustment rate.
The depth of the hierarchy is~$D$.

The first two results are algebraic: they hold for \emph{any} outcome
sequence, including adversarial, and do not depend on~$\eta$, $N$, or
any stochastic assumption.

\begin{theorem}[Simplex invariance]\label{thm:integrity}
For any outcome sequence satisfying $w_{c_t}(t) < 1$ whenever
$o_t = 0$ (the selector is not fully committed to a single arm at the
time of any negative outcome), proportional redistribution preserves
the simplex~$\Delta^{N-1}$:
\emph{(a)}~$\sum_i w_i(t) = 1$ for all~$t$;
\emph{(b)}~$w_i(t) \geq 0$ for all $i,t$;
\emph{(c)}~if $w_i(0) > 0$ for all~$i$, then $w_i(t) > 0$ for all $i,t$.
No projection, clamping, or renormalisation is required.

The precondition $w_{c_t}(t) < 1$ rules out the degenerate single-arm
case in which the negative-update redistribution factor
$(1{-}w_{c_t}{+}\eta w_{c_t})/(1{-}w_{c_t})$ is undefined.  Part~(c)
together with $N \geq 2$ implies the precondition automatically: if
the dynamics start in the open simplex, $w_i(t) > 0$ for every $i$
and~$t$, so $w_{c_t}(t) = 1 - \sum_{j \neq c_t} w_j(t) < 1$.
\end{theorem}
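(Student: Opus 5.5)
The plan is induction on $t$. The base case is that $\mathbf{w}(0)$ lies in the simplex, for instance uniformly $w_i(0)=1/N$. For the inductive step, assume $\mathbf{w}(t)\in\Delta^{N-1}$ and check the three properties for $\mathbf{w}(t+1)$, splitting on the two update branches. Write $c=c_t$ and $w_c=w_c(t)$. Inactive rounds leave $\mathbf{w}$ unchanged and need no argument. The same per-selector argument covers non-root selectors, with $s_v$ in place of $o_t$. Only the sign enters the update, so the outcome sequence may be arbitrary or adversarial.

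\emph{Positive branch} \eqref{eq:pos-selected}--\eqref{eq:pos-sibling}. Summing gives $(1-\eta)\sum_i w_i(t)+\eta = (1-\eta)+\eta = 1$. Nonnegativity holds because each new weight is $(1-\eta)w_i$ plus possibly $\eta>0$, and $\eta\in(0,1)$. For strict positivity, $(1-\eta)w_j>0$ whenever $w_j>0$, and the selected weight is at least $\eta>0$.

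\emph{Negative branch} \eqref{eq:neg-selected}--\eqref{eq:neg-sibling}. Here the precondition $w_c<1$ is needed, since it makes the denominator $1-w_c$ positive. The unselected children sum to $\sum_{j\ne c}w_j(t)=1-w_c$. After multiplication by the factor $(1-w_c+\eta w_c)/(1-w_c)$, their sum becomes $1-w_c+\eta w_c$. Adding $(1-\eta)w_c$ for the selected child gives a total of exactly $1$. The factor has positive numerator $1-w_c(1-\eta)>0$ and positive denominator, so the signs of the $w_j$ are preserved. The selected weight $(1-\eta)w_c$ is nonnegative, and strictly positive if $w_c>0$. This gives (a)--(c) in this branch.

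\emph{The main obstacle} is the well-definedness issue in the negative branch, namely confirming that the precondition really holds throughout. This is handled by the closing remark of the theorem, folded into the induction. If all $w_i(t)>0$ and $N\ge2$, then some $j\ne c$ has $w_j(t)>0$, so $w_c=1-\sum_{j\ne c}w_j<1$. The update is therefore defined at step $t$, and the argument above shows all $w_i(t+1)>0$. The induction then closes with the strict-positivity hypothesis alone, starting from the open simplex. Finally, no projection or renormalisation is needed, because the sum identity holds exactly in both branches by the algebra above.
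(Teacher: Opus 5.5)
Your proof is correct and follows the paper's argument. Both split into the positive and negative updates, check that the total stays $1$ and that coordinate signs are preserved, and use $w_{c_t}<1$ in the negative branch for the denominator. The paper phrases the positive branch as the convex combination $(1-\eta)w+\eta\,e_{c_t}$ and leaves the induction on $t$ and the self-sustaining precondition implicit, which you make explicit.
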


\begin{proof}
\emph{Positive update.}  The map $w \mapsto (1{-}\eta)w + \eta\, e_{c_t}$
is a convex combination of two simplex points, so $w'\in\Delta^{N-1}$.
If $w$ is interior and $0<\eta<1$, all coordinates remain positive.

\emph{Negative update.}  The selected child maps to
$w_{c_t}' = (1{-}\eta)\,w_{c_t}$, and each non-selected child to
\[
    w_j' = w_j + \eta\, w_{c_t}\,\frac{w_j}{1-w_{c_t}}
         = w_j \cdot \frac{1-w_{c_t}+\eta\, w_{c_t}}{1-w_{c_t}} .
\]
All coordinates remain non-negative (strictly positive if $w$ is
interior), and
\[
    \sum_i (w_i' - w_i)
    = -\eta\, w_{c_t} + \eta\, w_{c_t}\,\frac{\sum_{j\neq c_t} w_j}{1-w_{c_t}}
    = 0,
\]
so the simplex is preserved.
\end{proof}

\begin{theorem}[Signal fidelity]\label{thm:signal}
Let $v$ be a node at depth~$d$ and condition on the activity
event~$A_{v,t}$, so that $v$ is selected by its parent in round~$t$.
Then the binary signal $s_v(t) = \mathbf{1}[\delta_v(t) > 0]$ equals
the root outcome~$o_t$, where
$\delta_v(t) = w_{p(v),v}(t{+}1) - w_{p(v),v}(t)$ is the weight change
assigned to~$v$ by its parent.

A node that is not active in round~$t$ runs no update of its own, so
$\mathbf{w}_v(t{+}1) = \mathbf{w}_v(t)$.  Its parent-assigned weight
$w_{p(v),v}$ is a separate quantity and need not be constant.  If
$p(v)$ is itself active and selects a sibling of~$v$, then
$\delta_v(t) \neq 0$ in the interior and $\operatorname{sign}
\delta_v(t)$ is opposite to~$o_t$.
\end{theorem}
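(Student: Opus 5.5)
The plan is induction on depth~$d$, with the one-step sign computation for a single selector as the core lemma. Work in the open simplex throughout, which Theorem~\ref{thm:integrity}(c) guarantees from the uniform initialisation. This gives $0 < w_i(t) < 1$ for every child, so the negative-update factor in~\eqref{eq:neg-sibling} is well defined, and all the differences below are strictly nonzero.

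\emph{Step 1 (one-selector sign lemma).} Fix an active selector $u$ that selects child $c$ and applies the update with input bit $b \in \{0,1\}$. For $b=1$, \eqref{eq:pos-selected} gives $w_c' - w_c = \eta(1-w_c) > 0$. For $b=0$, \eqref{eq:neg-selected} gives $w_c' - w_c = -\eta w_c < 0$. So the selected child's change has sign $+$ exactly when $b=1$.

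For an unselected sibling $j \neq c$, \eqref{eq:pos-sibling} gives $w_j' - w_j = -\eta w_j < 0$ when $b=1$. When $b=0$, \eqref{eq:neg-sibling} gives
\[
w_j' - w_j = \frac{\eta\, w_c\, w_j}{1-w_c} > 0 .
\]
Hence $\mathbf{1}[\delta_c > 0] = b$ and $\mathbf{1}[\delta_j > 0] = 1-b$, with both changes nonzero in the interior. This proves the second paragraph of the statement directly, taking $b=s_{p(v)}(t)$. That bit equals $o_t$ once the inductive claim below is established for $p(v)$, or trivially when $p(v)$ is the root. The claim that an inactive $v$ keeps $\mathbf{w}_v(t{+}1)=\mathbf{w}_v(t)$ is just step~2 of Definition~\ref{def:protocol}.

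\emph{Step 2 (induction along the active path).} I will prove the following claim: for every active selector $u$ on the path, the bit $u$ feeds into its own update equals $o_t$. At the root this holds by definition, since the root updates with $o_t$. Now suppose $v$ is active at depth $d$. Its parent $p(v)$ is active, because the active path is generated top-down (Definition~\ref{def:active-path}), and $p(v)$ selected $v$. By the inductive hypothesis, $p(v)$ updated with bit $o_t$, so Step~1 gives $s_v(t) = \mathbf{1}[\delta_v(t)>0] = o_t$. Node $v$ then updates its own children with $s_v(t) = o_t$, which closes the induction. The conclusion $s_v(t)=o_t$ for active $v$ is exactly the first claim.

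The main obstacle is care rather than difficulty, and it has two parts. The first is to ensure that the weight change $\delta_v$ is taken in the parent's context $k_{p(v)} = \kappa_{p(v)}(x_t)$, the same coordinate the parent actually updated. The second is to ensure the parent's update at round $t$ is completed before $v$ reads it. Both follow from the protocol's ordering in Definition~\ref{def:protocol}, and I would state them explicitly. Strictness of the signs relies on interiority and $\eta\in(0,1)$, so I would cite Theorem~\ref{thm:integrity}(c) at the point where $w_c<1$ and $w_j>0$ are used.
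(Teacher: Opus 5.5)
Your proposal is correct and follows the paper's proof essentially line for line: you use induction on depth along the active path, with the same four explicit computations $\eta(1-w)$, $-\eta w$, $-\eta w$ and $\eta w w_c/(1-w_c)$ for the selected child and for an unselected sibling under each outcome. Two points the paper's proof leaves implicit are made precise in yours: you cite Theorem~\ref{thm:integrity}(c) to get $0<w<1$, which the strict signs need, and you note that the sibling case needs $s_{p(v)}(t)=o_t$, which comes from the induction.
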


\begin{proof}
By induction on depth.  The root uses $o_t$ directly.  At the root,
$o_t=1$ gives $\delta=\eta(1-w)>0$, hence $s=1=o_t$; $o_t=0$ gives
$\delta=-\eta\, w<0$, hence $s=0=o_t$.

For the inductive step, suppose $v$'s parent uses signal
$s_{p(v)}=o_t$ and $v$ is the selected child.  If $o_t=1$, the parent
applies the positive update
$w_{p(v),v}(t{+}1)=(1{-}\eta)\,w_{p(v),v}(t)+\eta$, so
$\delta_v=\eta(1-w_{p(v),v})>0$ and $s_v=1=o_t$.  If $o_t=0$, the
parent applies $w_{p(v),v}(t{+}1)=(1{-}\eta)\,w_{p(v),v}(t)$, so
$\delta_v=-\eta\, w_{p(v),v}<0$ and $s_v=0=o_t$.

For a node~$v$ with $A_{v,t}$ false, step~2 of
Definition~\ref{def:protocol} halts the protocol at~$v$, so $v$ applies
no update and $\mathbf{w}_v(t{+}1)=\mathbf{w}_v(t)$.  Its own weight
inside the parent's vector is a different quantity.  Suppose $p(v)$ is
active and selects $c_t \neq v$, and write $w = w_{p(v),v}(t)$ and
$w_c = w_{p(v),c_t}(t)$.  If $o_t = 1$, then~\eqref{eq:pos-sibling}
gives $\delta_v = -\eta\,w < 0$.  If $o_t = 0$,
then~\eqref{eq:neg-sibling} gives
\[
    \delta_v
    = w\left(\frac{1-w_c+\eta\,w_c}{1-w_c} - 1\right)
    = \frac{\eta\,w\,w_c}{1-w_c} \;>\; 0 .
\]
Both are nonzero whenever the parent's vector is interior, and both
carry the sign opposite to~$o_t$.  Only when $p(v)$ is itself inactive
does $\delta_v = 0$.  This is why the decoding step is gated
on~$A_{v,t}$ (Remark~\ref{rem:activity-gate}).
\end{proof}

\begin{corollary}[Minimal sufficiency of the signal]
\label{cor:sufficiency}
For the event $A_{v,t}=\{v\text{ active at depth }d\}$,
\[
I(o_t;\,s_v(t)\mid A_{v,t})=H(o_t\mid A_{v,t})
\]
for all~$d$.  The sign of the weight change is a sufficient statistic
for the root outcome.  Conditional on the sign and the pre-round state,
the magnitude carries no additional information:
\[
I\!\left(o_t;\,|\delta_v(t)|\mid
  s_v(t),\mathcal{F}_{t-1},A_{v,t}\right)=0.
\]
Whenever the conditional outcome is non-degenerate, a lossless signal
therefore needs two symbols, and the binary signal is minimal.
\end{corollary}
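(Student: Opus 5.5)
The plan is to derive all three claims from the pathwise identity in Theorem~\ref{thm:signal}. On the event $A_{v,t}$ it gives $s_v(t)=o_t$ almost surely, and it holds for every outcome sequence, hence for every realisation.

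\emph{Step~1 (mutual information).} Under the conditional law $P(\cdot\mid A_{v,t})$, the signal $s_v(t)$ is a deterministic function of $o_t$: it is the identity map on $\{0,1\}$. It is also a.s.\ a bijective function of $o_t$. Hence
\[
H(o_t\mid s_v(t),A_{v,t})=0,
\qquad
I(o_t;s_v(t)\mid A_{v,t})=H(o_t\mid A_{v,t})-H(o_t\mid s_v(t),A_{v,t})=H(o_t\mid A_{v,t}).
\]
Nothing here depends on $d$, because the induction in Theorem~\ref{thm:signal} covers every depth.

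\emph{Step~2 (magnitude carries nothing).} The cleanest route is the chain inequality
\[
0\le I(o_t;|\delta_v(t)|\mid s_v(t),\mathcal F_{t-1},A_{v,t})\le H(o_t\mid s_v(t),\mathcal F_{t-1},A_{v,t})\le H(o_t\mid s_v(t),A_{v,t})=0.
\]
The final inequality holds because conditioning reduces entropy, and the final equality is Step~1.

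It is also worth making the mechanism explicit, since this is where the notation needs care. Write $w=w_{p(v),v}(t)$. By the proof of Theorem~\ref{thm:signal}, on $A_{v,t}$ we have $|\delta_v|=\eta(1-w)$ if $s_v=1$ and $|\delta_v|=\eta w$ if $s_v=0$. In either case $|\delta_v|$ is a deterministic function of $(s_v,\,w)$.

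The one point to check is that $w$ is $\mathcal F_{t-1}$-measurable, meaning that $\mathcal F_{t-1}$ contains the pre-round state. I take this as the paper's convention, since the statement calls $\mathcal F_{t-1}$ "the pre-round state". Under that convention the displayed formulas give a second proof of Step~2. The chain inequality, however, needs no measurability assumption at all, so I will lead with it.

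\emph{Step~3 (minimality).} Suppose $P(o_t=1\mid A_{v,t})\in(0,1)$. Then $H(o_t\mid A_{v,t})>0$. Any signal $S$ with $I(o_t;S\mid A_{v,t})=H(o_t\mid A_{v,t})$ must satisfy $H(o_t\mid S,A_{v,t})=0$, so $o_t$ is a.s.\ a function of $S$. Since $o_t$ takes both values with positive probability, $S$ must take at least two values. A one-symbol signal is constant and has zero mutual information. The binary signal $s_v$ attains the bound with exactly two symbols, so it is minimal.

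I expect no real obstacle. The only delicate points are bookkeeping: stating the conditional law on $A_{v,t}$ (which has positive probability in the interior by Theorem~\ref{thm:integrity}(c)), and fixing the convention for $\mathcal F_{t-1}$.
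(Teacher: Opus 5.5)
Your proof is correct. Steps~1 and~3 match the paper's argument. The paper also reads $H(o_t\mid s_v,A_{v,t})=0$ off the pathwise identity of Theorem~\ref{thm:signal}, and it settles minimality with the same one-symbol-versus-two-symbols observation, which you spell out more carefully by noting that $o_t$ must be a.s.\ a function of $S$.

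The difference is in Step~2. The paper proves the second identity only through the explicit formula $|\delta_v(t)|=\eta\bigl[(1-w)s_v(t)+w(1-s_v(t))\bigr]$. From it, the magnitude is measurable with respect to $\mathcal{F}_{t-1}\vee\sigma(s_v(t))$, so it carries nothing once that $\sigma$-field is conditioned on. This is your ``second proof''.

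Your lead argument instead shows that the second identity follows formally from the first:
\[
I(o_t;|\delta_v|\mid s_v,\mathcal{F}_{t-1},A_{v,t})\le H(o_t\mid s_v,\mathcal{F}_{t-1},A_{v,t})\le H(o_t\mid s_v,A_{v,t})=0 .
\]
Once the sign is lossless, no further statistic of any form can add information about $o_t$. This buys robustness. It needs neither the update formulas nor $\mathcal{F}_{t-1}$-measurability of the pre-round weight. That matters in principle: with several contexts, the relevant weight $w^{\kappa_{p(v)}(x_t)}_{p(v),v}(t)$ depends on the current input $x_t$. It is pinned down by $\mathcal{F}_{t-1}$ only under the paper's per-context convention, which is exactly the point you flagged.

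The paper's route buys a stronger structural fact. The magnitude is a deterministic function of the sign and the pre-round weight, so it is redundant for every target variable, not only for $o_t$.
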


\begin{proof}
Since $s_v = o_t$ deterministically on the active path
(Theorem~\ref{thm:signal}), the conditional entropy
$H(o_t \mid s_v,A_{v,t}) = 0$, giving the first identity.
Writing $w=w_{p(v),v}(t)$ for the selected child's pre-round weight,
\[
|\delta_v(t)|=\eta\bigl[(1-w)s_v(t)+w(1-s_v(t))\bigr].
\]
Thus the magnitude is measurable with respect to
$\mathcal{F}_{t-1}\vee\sigma(s_v(t))$, not
$\mathcal{F}_{t-1}$ alone, and the second conditional mutual
information is zero.  A non-degenerate binary outcome cannot be
represented losslessly by a one-symbol alphabet, while $s_v$ uses
exactly two symbols.
\end{proof}

%% ============================================================
\section{Equilibrium Allocation}
\label{sec:equilibrium}
%% ============================================================

The following result is stated for a single selector with fixed child
qualities. In a hierarchy, Theorem~\ref{thm:hierarchy}(a) shows that
each selector's active-round dynamics are identical to this
single-selector setting.

\begin{definition}[Allocation equilibrium]\label{def:equilibrium}
A weight vector $w \in \Delta^{N-1}$ is an equilibrium of the
mechanism if the expected drift vanishes there,
$\mathbb{E}[\Delta w_i \mid w] = 0$ for every $i$. It is interior if
$w \in \operatorname{int}(\Delta^{N-1})$. Equilibrium here is a rest
point of the allocation dynamics, and it is not a Nash equilibrium.
Section~\ref{sec:distortion} studies how this rest point responds when
a selector distorts the bit it passes on.
\end{definition}

\begin{theorem}[Equilibrium allocation]\label{thm:price}
Under proportional redistribution with constant~$\eta$ and stochastic
outcomes with quality gap $\Delta > 0$:
\begin{enumerate}
    \item[\emph{(a)}] \textbf{$N = 2$, exact equilibrium and stability.}
    The expected drift is
    \begin{equation}\label{eq:drift-n2}
        \mathbb{E}[\Delta w_1 \mid w_1]
        = \eta\,\alpha\,(w_1^* - w_1),
    \end{equation}
    where the unique equilibrium is
    \begin{equation}\label{eq:w-star-exact}
        w_1^* = \frac{\Delta + 1 - p^*}{\Delta + 2(1-p^*)}.
    \end{equation}
    The drift is linear with negative slope~$-\eta\alpha$, so $w_1^*$
    is the unique globally stable fixed point of the expected dynamics.

    \item[\emph{(b)}] \textbf{General~$N$, interior equilibrium.}
    Suppose the interiority condition holds:
    \begin{equation}\label{eq:interiority}
        p_N > \frac{\sum_{j=1}^{N} p_j - 1}{N - 1}.
    \end{equation}
    Then the expected dynamics admit a unique interior equilibrium
    (Definition~\ref{def:equilibrium}), given by the affine formula
    \begin{equation}\label{eq:explicit-eq}
        w_i^* = \frac{p_i + c}{1 + c}, \qquad
        c = \frac{1 - \sum_{j=1}^{N} p_j}{N - 1}.
    \end{equation}
    The quality ordering is preserved:
    $p_i > p_j \Rightarrow w_i^* > w_j^*$.
\end{enumerate}
\end{theorem}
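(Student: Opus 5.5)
The plan is to compute the expected one-step drift of a single coordinate $w_i$ for general $N$ and put it in replicator form. Then part (b) reduces to an equal-ratio condition, and part (a) is the case $N=2$.

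\textbf{Step 1 (drift).} Fix $w$ in the open simplex; by Theorem~\ref{thm:integrity} the dynamics stay there. Child $i$ is selected with probability $w_i$. If it is selected, the updates \eqref{eq:pos-selected} and \eqref{eq:neg-selected} change $w_i$ by $\eta(1-w_i)$ with probability $p_i$ and by $-\eta w_i$ otherwise. If a sibling $j$ is selected, instead, \eqref{eq:pos-sibling} and \eqref{eq:neg-sibling} change $w_i$ by $-\eta w_i$ with probability $p_j$ and by $\eta w_i w_j/(1-w_j)$ otherwise. Summing over the selected child gives
\[
\mathbb{E}[\Delta w_i\mid w]=\eta\Bigl[w_i(p_i-w_i)+w_i\sum_{j\neq i}\frac{w_j(w_j-p_j)}{1-w_j}\Bigr].
\]
With $h_j=(1-p_j)/(1-w_j)$ we have $p_j-w_j=(1-w_j)(1-h_j)$. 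The drift then collapses to the replicator-type form
\begin{equation*}
\mathbb{E}[\Delta w_i\mid w]=\eta\,w_i\Bigl(\textstyle\sum_j w_j h_j - h_i\Bigr).
\end{equation*}
This identity is the main piece of bookkeeping. I expect it to be the only place where care is needed, because the self term and the sibling terms must combine as $(1-w_i)(1-h_i)-\sum_{j\ne i}w_j(1-h_j)=(1-h_i)-\sum_j w_j(1-h_j)$.

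\textbf{Step 2 (part (b)).} At an interior point every $w_i>0$. So the drift vanishes for all $i$ if and only if all $h_i$ equal the common value $\bar h=\sum_j w_jh_j$. Call this value $\lambda$. Then $1-w_i=(1-p_i)/\lambda$. Summing over $i$ gives $N-1=(N-\sum_j p_j)/\lambda$, which forces $\lambda=(N-\sum_j p_j)/(N-1)=1+c$. Hence $w_i=1-(1-p_i)/(1+c)=(p_i+c)/(1+c)$, which is \eqref{eq:explicit-eq}. The derivation shows that every interior rest point must have this form, so uniqueness follows. Conversely, this vector sums to one, and $1+c>0$ because $\sum_j p_j< N$ is implied by \eqref{eq:interiority}. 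Then $w_i>0$ for all $i$ holds exactly when $p_N+c>0$, which is the interiority condition \eqref{eq:interiority}; and $w_i<1$ is automatic because $p_i<1+\ldots$, i.e.\ $p_i+c<1+c$ whenever $p_i<1$. So the vector is a genuine interior equilibrium. Since the formula is affine in $p_i$ with positive slope $1/(1+c)$, it preserves the quality ordering.

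\textbf{Step 3 (part (a)).} Set $N=2$ and $w_2=1-w_1$. Then $h_1=(1-p_1)/w_2$ and $h_2=(1-p_2)/w_1$, and the drift becomes $\eta w_1w_2(h_2-h_1)=\eta\bigl[w_2\ldots\bigr]$, namely
\[
\mathbb{E}[\Delta w_1\mid w_1]=\eta\bigl[(1-w_1)(1-p_1)\cdot(-1)+w_1\cdot 0\bigr]+\ldots=\eta\bigl[(1-w_1)(1-p_1)^{\,}\bigr]\ldots
\]
More directly, $w_1w_2(h_2-h_1)=w_2(1-p_2)-w_1(1-p_1)$, so the drift is $\eta\bigl[(1-p_2)-w_1\bigl((1-p_1)+(1-p_2)\bigr)\bigr]$. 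This is linear in $w_1$ with slope $-\eta\bigl(2-p_1-p_2\bigr)=-\eta\alpha$, since $\alpha=\Delta+2(1-p^*)$. It has a unique root $w_1^*=(1-p_2)/\alpha=(\Delta+1-p^*)/(\Delta+2(1-p^*))$, which is \eqref{eq:w-star-exact}, and therefore equals $\eta\alpha(w_1^*-w_1)$. The interiority condition holds automatically here, and $\Delta>0$ gives $w_1^*>1/2$. Linearity with negative slope gives monotone exponential convergence of the mean-flow ODE $\dot w_1=\eta\alpha(w_1^*-w_1)$ from every initial point. It also makes the discrete expected map a contraction with factor $1-\eta\alpha\in(0,1)$, because $\alpha\le 2$ and $\eta<1$. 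This yields the stated global stability.
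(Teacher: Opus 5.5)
Your proposal is correct and follows the paper's proof essentially step for step: the four-case drift is rewritten in the replicator form $\eta\,w_i(\bar h - h_i)$, the equal-$h_i$ condition is summed over $i$ to force $\lambda = 1+c$ in part (b), and part (a) reduces to $w_1w_2(h_2-h_1) = w_2(1-p_2) - w_1(1-p_1)$. Three incidental slips should be fixed, though the argument relies on none of them: delete the abandoned display before ``More directly'' in Step 3; the expected-map factor $1-\eta\alpha$ lies in $(-1,1)$ rather than $(0,1)$, because $\eta\alpha\in(0,2)$ can exceed $1$ (the map is still a contraction, but may oscillate); and for $N=2$ the interiority condition reduces to $p_1<1$, so it does not hold automatically when $p^*=1$, in which case $w_1^*=1$.
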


\begin{lemma}[Replicator form of the drift]\label{lem:replicator}
\label{lem:G1}
Define the failure rate per unit of slack
\[
    h_i(w_i) \;=\; \frac{1-p_i}{1-w_i},
    \qquad
    \bar h(w) \;=\; \sum_{j=1}^{N} w_j\, h_j(w_j).
\]
For any interior $w \in \mathrm{int}(\Delta^{N-1})$, the expected
single-round drift is
\begin{equation}\label{eq:drift-general}
    \mathbb{E}[\Delta w_i \mid w]
    \;=\; \eta\, w_i \bigl( \bar h(w) - h_i(w_i) \bigr).
\end{equation}
\end{lemma}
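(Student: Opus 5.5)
The plan is a direct computation of the conditional expectation of $\Delta w_i$ given $w$. Child $i$ moves in three distinct ways depending on which child is selected and what the outcome is, so we average over these events and then regroup the result into the replicator form.

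Fix an interior $w$. By the selection rule~\eqref{eq:selection}, child $j$ is selected with probability $w_j$. Given that $j$ is selected, the outcome is $1$ with probability $p_j$. We then read off $\Delta w_i$ in each of the four cases from \eqref{eq:pos-selected}--\eqref{eq:neg-sibling}:
\begin{itemize}
\item If $i$ is selected and succeeds, $\Delta w_i = \eta(1-w_i)$.
\item If $i$ is selected and fails, $\Delta w_i = -\eta w_i$.
\item If $j\neq i$ is selected and succeeds, $\Delta w_i = -\eta w_i$.
\item If $j\neq i$ is selected and fails, $\Delta w_i = \eta w_i w_j/(1-w_j)$. This is the same computation as in the proof of Theorem~\ref{thm:signal}.
\end{itemize}
Interiority makes every denominator $1-w_j$ positive, which also secures the precondition of Theorem~\ref{thm:integrity}.

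Summing the four cases with their probabilities gives
\[
\mathbb{E}[\Delta w_i\mid w]=\eta\Bigl[w_ip_i(1-w_i)-w_i^2(1-p_i)-w_i\sum_{j\neq i}w_jp_j+w_i\sum_{j\neq i}\frac{w_j^2(1-p_j)}{1-w_j}\Bigr].
\]
Factor out $\eta w_i$. The key regrouping is to write each failure term in the form $w_j^2(1-p_j)/(1-w_j)=w_j^2h_j$. The self term $-w_i(1-p_i)$ and the two success contributions $p_i(1-w_i)-\sum_{j\neq i}w_jp_j$ must then be recast in terms of $h$.

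For that recasting I would use the identity
\[
w_jh_j=\frac{w_j(1-p_j)}{1-w_j},\qquad w_j^2h_j=w_jh_j-w_j(1-p_j).
\]
Applying it over all $j$ gives
\[
\sum_{j}w_j^2h_j=\bar h-\sum_j w_j(1-p_j)=\bar h-(1-\bar p),\qquad \bar p:=\sum_j w_jp_j .
\]
The bracket becomes
\[
p_i-w_ip_i-w_i+w_ip_i-(\bar p-w_ip_i)+\bigl(\bar h-1+\bar p\bigr)-w_i^2h_i .
\]
Collecting terms, the $\bar p$ pieces cancel, leaving
\[
\bar h-(1-p_i)-w_i(1-p_i)-w_i^2h_i .
\]
Finally we write $(1-p_i)=(1-w_i)h_i$, so that
\[
(1-w_i)h_i+w_i(1-w_i)h_i+w_i^2h_i=h_i(1-w_i+w_i-w_i^2+w_i^2)=h_i .
\]
Hmm—more precisely, the leftover terms should total $h_i$. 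The claimed form $\eta w_i(\bar h-h_i)$ then follows. The step most prone to error is this bookkeeping: the sibling failure sum runs over $j\neq i$ while $\bar h$ runs over all $j$. I would add and subtract the $j=i$ term $w_i^2h_i$ explicitly before regrouping, and verify the cancellation symbolically. As a sanity check, I would confirm that the drifts sum to zero, which is consistent with Theorem~\ref{thm:integrity}. I would also check that for $N=2$ the formula reduces to the linear drift~\eqref{eq:drift-n2}.
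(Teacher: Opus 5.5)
Your computation is correct and follows the paper's route: the same four-case expectation, then algebraic regrouping into $\eta w_i(\bar h - h_i)$. Your last display already shows that the leftover terms sum exactly to $h_i$, so the hedge there is unnecessary; the only difference is bookkeeping, since the paper first merges each sibling's success and failure contributions into $w_j(w_j-p_j)/(1-w_j) = -w_j + w_j h_j$ and writes $p_i - w_i = (1-w_i)(1-h_i)$, which avoids your auxiliary $\bar p$ and the add-and-subtract of the $j=i$ term.
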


\begin{proof}
For each $j$ (including $j=i$),
\[
    p_j - w_j = (1-w_j) - (1-p_j) = (1-w_j)(1 - h_j),
\]
and for $j \neq i$,
\[
    \frac{w_j(w_j-p_j)}{1-w_j}
    = \frac{w_j\bigl[(w_j-1)+(1-p_j)\bigr]}{1-w_j}
    = -\,w_j + w_j h_j .
\]
Substituting $p_i - w_i = (1-w_i)(1-h_i)$ and the above into the
four-case expectation
\[
    \mathbb{E}[\Delta w_i]
    = \eta\, w_i\!\left[(p_i - w_i)
      + \sum_{j \neq i} \frac{w_j(w_j - p_j)}{1 - w_j}\right]
\]
gives
\[
    (1{-}w_i)(1{-}h_i) + \sum_{j\neq i}(-w_j + w_j h_j)
    = (1-w_i) - (1-w_i)h_i - (1-w_i) + \bar h - w_i h_i
    = \bar h - h_i. \qedhere
\]
\end{proof}

\begin{proof}[Proof of Theorem~\ref{thm:price}]
\emph{Part (b): General $N$.}
At an interior equilibrium $w_i > 0$, so the replicator
form~\eqref{eq:drift-general} vanishes iff $h_i = \bar h = \lambda$
(say) for every~$i$.  Then
\[
    1 - w_i = \frac{1-p_i}{\lambda},
\]
and summing over $i$ gives $N-1 = (N - \sum_i p_i)/\lambda$, hence
\[
    \lambda = \frac{N-\sum_i p_i}{N-1} = 1 + c,
    \qquad c = \frac{1-\sum_i p_i}{N-1}.
\]
Therefore
\[
    w_i^* = 1 - \frac{1-p_i}{1+c} = \frac{p_i+c}{1+c},
\]
which is~\eqref{eq:explicit-eq}.
The weight gap
\begin{equation}\label{eq:weight-gap}
    w_i^* - w_j^* = \frac{p_i - p_j}{1+c}
\end{equation}
gives uniqueness and quality ordering ($1+c > 0$ since $\sum_i p_i < N$).
The equilibrium is interior iff $p_i + c > 0$ for all~$i$, i.e.\
$p_N > (\sum_j p_j - 1)/(N{-}1)$, which
is~\eqref{eq:interiority}.  Interiority also forces $p_i < 1$ (from
$1-w_i^*=(1-p_i)/(1+c)>0$), so the drift carries no $0/0$
denominator.

\emph{Part (a): $N=2$.}
With $w=w_1$, $w_2=1-w$,
\[
    \frac{1}{\eta}\,\mathbb{E}[\Delta w]
    = w(1-w)\,(h_2 - h_1)
    = (1-w)(1-p_2) - w(1-p_1),
\]
so
\[
    \mathbb{E}[\Delta w]
    = \eta\bigl[(1-p_2) - (2-p_1-p_2)\,w\bigr]
    = \eta\,\alpha\,(w^* - w),
\]
with $\alpha = 2-p_1-p_2$ and
$w^* = (1-p_2)/\alpha = (\Delta+1-p^*)/(\Delta+2(1-p^*))$.
The drift is linear with slope $-\eta\alpha < 0$, so $w^*$ is the
unique globally stable fixed point.
\end{proof}

\begin{corollary}[Explicit equilibrium]\label{cor:explicit}
Under the interiority condition~\eqref{eq:interiority}, the unique
interior fixed point is
$w_i^* = (p_i + c)/(1 + c)$, where
$c = (1 - \sum_j p_j)/(N - 1)$.
At equilibrium, $w_i^* - w_j^* = (p_i - p_j)/(1{+}c)$: the weight
gap is proportional to the quality gap.
\end{corollary}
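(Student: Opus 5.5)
The corollary is essentially a restatement of Theorem~\ref{thm:price}(b), so the proof should reduce to it plus one line of algebra. I will begin from the replicator form of Lemma~\ref{lem:replicator}. At any interior point every $w_i > 0$, so $\mathbb{E}[\Delta w_i \mid w] = 0$ for all $i$ holds exactly when $h_i(w_i) = \bar h(w)$ for all $i$. This is the equi-ratio condition: there is a common value $\lambda$ with $(1-p_i)/(1-w_i) = \lambda$. Conversely, if all $h_i$ share a common value $\lambda$, then $\bar h = \sum_j w_j \lambda = \lambda$, and the drift vanishes. So interior fixed points are exactly the interior solutions of the equi-ratio system.

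Next I will solve that system. From $1 - w_i = (1-p_i)/\lambda$, summing over $i$ and using $\sum_i w_i = 1$ gives $N - 1 = (N - \sum_i p_i)/\lambda$. This pins down $\lambda = 1 + c$ with $c = (1-\sum_j p_j)/(N-1)$, and so $w_i^* = (p_i + c)/(1+c)$. The solution is unique because $\lambda$ is forced by the normalisation, and each $w_i$ is then determined by $\lambda$.

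I then need to confirm that this point is admissible. First, $1 + c > 0$. This holds because interiority~\eqref{eq:interiority} forces $p_i < 1$ for the relevant indices; more directly, $\sum_j p_j < N$ unless all $p_j = 1$, and that case is excluded by $\Delta > 0$. Second, $w_i^* > 0$ for every $i$. This is equivalent to $p_i + c > 0$, and since $p_N$ is the smallest quality it suffices to check $p_N + c > 0$, which is precisely~\eqref{eq:interiority}. Third, $w_i^* < 1$, which follows from $w_i^* > 0$ for the other coordinates together with $N \ge 2$.

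The gap identity then follows by direct subtraction: $w_i^* - w_j^* = (p_i - p_j)/(1+c)$, because the $c$ terms in the numerators cancel. The only step needing care is the positivity of $1+c$, which I handle as above. I do not expect any real obstacle, since Theorem~\ref{thm:price}(b) already carries out this computation and the corollary can cite it directly.
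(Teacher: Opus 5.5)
Your proposal is correct and follows essentially the same route as the paper. The paper gives no separate proof of the corollary; the computation sits in the proof of Theorem~\ref{thm:price}(b). There, vanishing of the replicator drift at an interior point forces $h_i = \lambda$ for every $i$, summing the slacks gives $\lambda = 1+c$, $1+c>0$ follows from $\sum_i p_i < N$, and interiority is exactly $p_N + c > 0$. Your explicit converse check and your derivation of uniqueness from the normalisation pinning $\lambda$ (the paper reads uniqueness off the gap identity) are minor tidy-ups, not a different argument.
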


\begin{remark}[Equi-ratio principle]\label{rem:equi-ratio}
Since $(p_i-w_i)/(1-w_i) = 1-h_i$, the equilibrium condition
$h_i = 1+c$ reads
\begin{equation}\label{eq:equi-ratio}
    \frac{p_i - w_i^*}{1 - w_i^*} = -c,
\end{equation}
so the \emph{unrealised quality per unit of remaining capacity} is
equalised across all children.  The mechanism allocates weight until
no child offers a better marginal return than any other, which is the
clearing condition of the economic reading in
Section~\ref{sec:discussion}.  The equilibrium is not
one-period welfare-maximising, but it is the unique allocation
equalising the marginal condition $h_i = 1+c$.
When all qualities are equal ($p_i = p < 1$), $w_i^* = 1/N$.  When
$\sum_j p_j = 1$ with all $p_i > 0$, $c=0$ and $w_i^* = p_i$.
\end{remark}

\subsection{Stochastic Convergence for $N=2$}
\label{sec:stoch-n2}

\begin{theorem}[Stochastic convergence, $N{=}2$]\label{thm:stochN2}
For $N{=}2$ with stationary outcome probabilities $p_1, p_2$ and
quality gap $\Delta > 0$, the proportional-redistribution recursion
satisfies:
\begin{enumerate}
    \item[\emph{(a)}] \textbf{Decreasing step size.}  Under a
    schedule $(\eta_t)_{t \geq 1} \subset (0,1)$ with
    $\sum_t \eta_t = \infty$ and $\sum_t \eta_t^2 < \infty$,
    $w_1(t) \to w_1^*$ almost surely, where $w_1^*$ is the
    equilibrium of Theorem~\ref{thm:price}(a).

    \item[\emph{(b)}] \textbf{Constant step size.}  For fixed
    $\eta \in (0,1)$, $\mathbb{E}[w_1(t)] \to w_1^*$ exactly, and
    \[
        \limsup_{t\to\infty}\,
        \mathbb{E}\bigl[(w_1(t)-w_1^*)^2\bigr]
        \;\leq\; \frac{\eta}{\alpha\,(2 - \eta\alpha)}
        \;=\; O(\eta)
        \quad\text{as } \eta \to 0.
    \]
\end{enumerate}
\end{theorem}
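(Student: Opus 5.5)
The plan is to write the $N{=}2$ recursion as a Robbins--Monro scheme whose mean field is the affine drift of Theorem~\ref{thm:price}(a), and then handle both parts through the error $e_t = w_1(t) - w_1^*$.

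\textbf{Setup.} Write $w = w_1(t)$ and let $\mathcal{F}_t$ be the history up to round $t$. Define the normalised increment $D_t = (w_1(t{+}1) - w_1(t))/\eta_t$. Each of the four update cases \eqref{eq:pos-selected}--\eqref{eq:neg-sibling} is linear in the step. For $N{=}2$ they give $D_t \in \{(1-w),\, -w,\, -w,\, (1-w)\}$; in the last case $w_c/(1-w_c) = (1-w)/w$ times $w$. Hence $|D_t| \le 1$ deterministically.

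Because every update is linear in $\eta_t$, the drift computation of Theorem~\ref{thm:price}(a) applies verbatim with $\eta_t$ in place of $\eta$. That gives $\mathbb{E}[D_t \mid \mathcal{F}_t] = \alpha(w_1^* - w)$. Here $\alpha = \Delta + 2(1-p^*) > 0$ because $\Delta > 0$, and $\alpha \le 2$. So
\[
  e_{t+1} = e_t - \eta_t\alpha e_t + \eta_t \xi_t,
  \qquad \xi_t = D_t - \mathbb{E}[D_t \mid \mathcal{F}_t].
\]
The noise $\xi_t$ is a martingale difference with $\mathbb{E}[\xi_t^2 \mid \mathcal{F}_t] \le \mathbb{E}[D_t^2 \mid \mathcal{F}_t] \le 1$. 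Theorem~\ref{thm:integrity}(c) keeps the state interior, so the negative-update factor is always defined.

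\textbf{Part (a).} Use the Lyapunov function $V_t = e_t^2$. Expanding and using $\mathbb{E}[\xi_t \mid \mathcal{F}_t] = 0$ gives
\[
  \mathbb{E}[V_{t+1} \mid \mathcal{F}_t]
  = (1-\eta_t\alpha)^2 V_t + \eta_t^2\,\mathbb{E}[\xi_t^2 \mid \mathcal{F}_t]
  \le V_t - \eta_t\alpha(2 - \eta_t\alpha) V_t + \eta_t^2 .
\]
Since $\eta_t \to 0$, we have $2 - \eta_t\alpha \ge 1$ eventually. The Robbins--Siegmund lemma, using $\sum_t \eta_t^2 < \infty$, then yields two facts: $V_t$ converges almost surely, and $\sum_t \eta_t V_t < \infty$ almost surely. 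Combined with $\sum_t \eta_t = \infty$, the limit of $V_t$ must be $0$, so $w_1(t) \to w_1^*$ almost surely.

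\textbf{Part (b).} Take expectations in the error recursion with constant $\eta$. Because the drift is exactly affine, the noise averages out and
\[
  \mathbb{E}[e_{t+1}] = (1-\eta\alpha)\,\mathbb{E}[e_t],
\qquad\text{so}\qquad
  \mathbb{E}[e_t] = (1-\eta\alpha)^t e_0 .
\]
This tends to $0$ because $0 < \eta\alpha < 2$ gives $|1-\eta\alpha| < 1$.

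For the second moment, the same expansion gives
\[
  \mathbb{E}[e_{t+1}^2] \le (1-\eta\alpha)^2\,\mathbb{E}[e_t^2] + \eta^2 .
\]
Iterating this linear inequality, the limsup is at most
\[
  \frac{\eta^2}{1-(1-\eta\alpha)^2} = \frac{\eta}{\alpha(2-\eta\alpha)},
\]
which is the claimed bound.

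\textbf{Main obstacle.} The substantive step is the uniform bound $\mathbb{E}[\xi_t^2 \mid \mathcal{F}_t] \le 1$, which rests on $|D_t| \le 1$ in all four cases. The only delicate case is the negative-sibling increment $\eta\, w\, w_c/(1-w_c)$. It is controlled only because $N{=}2$ forces $w_c = 1-w$, which collapses the increment to $\eta(1-w)$. Everything else is the standard Robbins--Siegmund argument applied to an exactly linear mean field.
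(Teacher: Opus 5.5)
Your proof is correct. Part~(b) is essentially the paper's argument: the affine drift gives a geometric recursion for the mean, and the one-step second-moment recursion has its cross term killed by the martingale-difference property.

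Part~(a) takes a different route. The paper casts the recursion as a stochastic-approximation scheme and cites Borkar's ODE theorem, concluding from the fact that the limiting ODE $\dot w_1 = \alpha(w_1^* - w_1)$ is a linear contraction. You instead use the exact linearity of the mean field to show directly that $e_t^2$ is a nonnegative almost-supermartingale, then close with Robbins--Siegmund plus $\sum_t \eta_t = \infty$.

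\emph{What your route buys.} It is self-contained and elementary: there are no Lipschitz, noise-growth or a.s.-boundedness hypotheses to check against an external theorem, and no limit-set characterisation to unwind. A single conditional second-moment computation also drives both parts, since the constant-step bound in (b) is just the unconditional form of the inequality you feed to Robbins--Siegmund.

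\emph{What the paper's route buys.} It is shorter on the page and does not rely on the drift being affine. It needs only a globally asymptotically stable equilibrium of the limiting ODE. That makes it the natural template for a nonlinear mean field, such as the general-$N$ replicator drift paired with the Lyapunov function of Theorem~\ref{thm:G7}. Your quadratic $V_t$ is tied to the scalar affine contraction and would need replacing there.

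One bonus from your case analysis: since $D_t \in \{1-w,\,-w\}$, you have $D_t = B_t - w$ with $B_t$ conditionally Bernoulli. Hence $\mathbb{E}[\xi_t^2 \mid \mathcal{F}_t] \le 1/4$. This reconciles the paper's two noise constants (9 in (a), 1 in (b)) and would tighten the bound in (b) by a factor of four.
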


\begin{proof}
\textbf{Part (a).}
For $N{=}2$ the proportional-redistribution recursion is a
stochastic-approximation iteration with linear drift
$h(w_1) = \alpha\,(w_1^* - w_1)$ and bounded martingale-difference
noise ($|F| \leq 1$, so $\mathbb{E}[M_{t+1}^2 \mid \mathcal{F}_t] \leq 9$).
The iterates remain in $[0,1]$ by simplex invariance
(Theorem~\ref{thm:integrity}(a),(c)), and the schedule
$(\eta_t)$ satisfies $\sum_t \eta_t = \infty$,
$\sum_t \eta_t^2 < \infty$.  These are the hypotheses of the
stochastic-approximation theorem of Borkar~\cite{Borkar2008}
(Theorem~2.1); the limiting ODE $\dot{w}_1 = \alpha(w_1^*{-}w_1)$ is a
linear contraction with unique equilibrium~$w_1^*$, so
$w_1(t) \to w_1^*$ almost surely.

\textbf{Part (b).}
For fixed $\eta$, the centered recursion
$w_1(t{+}1) - w_1^* = (1{-}\eta\alpha)(w_1(t){-}w_1^*) + \eta M_{t+1}$
has mean converging geometrically to zero and second moment
satisfying
\[
    \limsup_{t\to\infty}\,
    \mathbb{E}\bigl[(w_1(t)-w_1^*)^2\bigr]
    \;\leq\; \frac{\eta^2}{1 - (1{-}\eta\alpha)^2}
    \;=\; \frac{\eta}{\alpha\,(2 - \eta\alpha)}
    \;=\; O(\eta),
\]
using $\mathbb{E}[M_{t+1}^2] \leq 1$ and the martingale-difference
property to kill the cross term.
\end{proof}

%% ============================================================
\section{Local Linear Rate}
\label{sec:stability}
%% ============================================================

\begin{theorem}[Local linear rate and spectrum]\label{thm:stability}
Under the interiority condition~\eqref{eq:interiority}, the unique
interior equilibrium~$\mathbf{w}^*$ of Theorem~\ref{thm:price}(b) is
locally exponentially stable for the mean flow on~$\Delta^{N-1}$,
with real negative tangent eigenvalues and explicit spectral bounds.
(Global asymptotic stability is established separately by
Theorem~\ref{thm:G7}; the independent contribution of this theorem is
the local linearisation, the real spectrum, and the rate bounds.)
\end{theorem}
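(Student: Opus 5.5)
We linearise the replicator form of Lemma~\ref{lem:replicator}, $F_i(w) = \eta\, w_i(\bar h(w) - h_i(w_i))$, at $\mathbf{w}^*$ and restrict to the tangent space $T = \{u : \sum_i u_i = 0\}$. At the equilibrium $h_i(w_i^*) = \bar h(w^*) = \lambda = 1+c$ for every $i$, so every term in which the bracket $\bar h - h_i$ is left undifferentiated drops out. Only the derivative of the bracket survives:
\[
  J_{ij} = \eta\, w_i^* \left( \frac{\partial \bar h}{\partial w_j} - \delta_{ij}\, h_i'(w_i^*) \right).
\]

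We first compute the entries. We have $h_i'(w_i) = (1-p_i)/(1-w_i)^2 = h_i/(1-w_i)$, so at equilibrium $h_i' = \lambda/(1-w_i^*) =: \beta_i > 0$. Also $\partial\bar h/\partial w_j = h_j + w_j h_j' = \lambda + w_j^*\beta_j$. Since $\sum_j u_j = 0$ on $T$, the constant $\lambda$ contributes nothing. The Jacobian on $T$ therefore acts as
\[
  (Ju)_i = \eta\, w_i^* \left( \sum_j w_j^*\beta_j u_j - \beta_i u_i \right).
\]
In matrix form this is $J = \eta\, W(\mathbf{1} b^\top - B)$, with $W = \mathrm{diag}(w^*)$, $B = \mathrm{diag}(\beta)$ and $b = W\beta$. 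The image lies in $T$ because $\sum_i w_i^* = 1$, so $T$ is invariant.

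Next we establish the real negative spectrum by symmetrisation. Write $J = -\eta\, W(B - \mathbf{1}w^{*\top}B) = -\eta\, W(I - \mathbf{1}w^{*\top})B$. We then conjugate by $W^{1/2}$ (or use the weighted inner product $\langle u, v\rangle = \sum_i u_i v_i/w_i^*$ on $T$, the Shahshahani metric). In that inner product the operator $W(I - \mathbf{1}w^{*\top})$ is the orthogonal projection-type map onto $T$, and $B$ is symmetric positive definite. So $J|_T$ is similar to $-\eta\, P^{1/2}\tilde B P^{1/2}$ restricted to the range of $P$, a symmetric negative definite operator on an $(N-1)$-dimensional space. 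Concretely, we check that $-J|_T$ is self-adjoint with respect to $\langle\cdot,\cdot\rangle_{B}$-type pairing. The most robust route is to note that $-J|_T = \eta\, M B$ with $M = W - w^* w^{*\top}$, which is symmetric positive semidefinite with kernel $\mathrm{span}(\mathbf{1})$. Then $MB$ is similar to $B^{1/2} M B^{1/2}$, which is symmetric PSD. Its kernel is $B^{-1/2}\mathrm{span}(\mathbf 1)$, which corresponds to the normal direction and not to $T$. Hence the $N-1$ tangent eigenvalues are real and strictly negative, and local exponential stability follows by Hartman--Grobman, or simply by Lyapunov's indirect method on the invariant simplex.

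Finally, we derive the spectral bounds by Rayleigh quotients. For $u \in T$, set $-J u = \mu u$ and pair with $B u$: $\mu\, u^\top B u = \eta\, u^\top B M B u$. Using $\beta_{\min} = \lambda/(1-w^*_N)$ and $\beta_{\max} = \lambda/(1-w^*_1)$, together with Courant--Fischer for $B^{1/2}MB^{1/2}$, we would bound
\[
  \eta\, \beta_{\min}\, \sigma_2(M) \le \mu \le \eta\, \beta_{\max}\, \sigma_{\max}(M).
\]
Here $\sigma_{\max}(M) \le w_1^*$, and the smallest nonzero eigenvalue of $M$ is at least $w_N^*$ by interlacing, since $M$ is a rank-one downdate of $W$. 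This gives
\[
  \eta\,\lambda\, \frac{w_N^*}{1-w_N^*} \le |\mu| \le \eta\,\lambda\, \frac{w_1^*}{1-w_1^*}.
\]
As a sanity check, for $N=2$ this should reproduce $\eta\alpha$ from Theorem~\ref{thm:price}(a). The main obstacle is this step: getting the interlacing lower bound to be correct on $T$ rather than on the full space, since the kernel of $B^{1/2}MB^{1/2}$ is not $\mathbf{1}$. We would handle it by working with the generalized eigenproblem $M B u = \mu u$ restricted to $T$, and applying Cauchy interlacing to the $(N-1)\times(N-1)$ compression. If the sharp constants fail, we would fall back to the cruder bounds $\eta\,\beta_{\min} w_N^*$ and $\eta\,\beta_{\max}$.
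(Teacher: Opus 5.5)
Your proposal is correct, and your linearisation on $T$ is exactly the paper's \eqref{eq:jacobian}, with your $\beta_i$ equal to the paper's $d_i$. The two arguments differ in the metric used to symmetrise $J$.

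The paper works on the tangent space with the Shahshahani inner product $\langle x,y\rangle_{W^{-1}}$, which you mention in passing, and checks in one line that $\langle x,Jy\rangle_{W^{-1}}=-x^\top Dy$. That gives the real negative spectrum. The Rayleigh quotient $z^\top Dz/z^\top W^{-1}z$ is a weighted average of the numbers $w_i^*d_i$ with weights $z_i^2/w_i^*$. So the same identity also gives the bounds $\min_i w_i^*d_i\le|\lambda|\le\max_i w_i^*d_i$ of Corollary~\ref{cor:rate} at no extra cost.

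You instead factor $-J=\eta MB$ with the covariance matrix $M=W-w^*w^{*\top}$ and symmetrise by $B^{1/2}$, i.e.\ you use the $B$-metric on all of $\mathbb{R}^N$. This needs one extra observation: $\ker(MB)=\mathrm{span}(B^{-1}\mathbf{1})$ is transverse to $T$. It is not normal to $T$, as you write, but transversality together with $\mathrm{range}(MB)\subseteq\mathrm{range}(M)=T$ is all the argument uses.

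Your route then bounds $B$ and $M$ separately. Let $\nu_+$ and $\nu_{\max}$ be the smallest nonzero and the largest eigenvalue of $M$. You get $\eta\beta_{\min}\nu_+\le|\mu|\le\eta\beta_{\max}\nu_{\max}$, and then $\nu_+\ge w_N^*$ and $\nu_{\max}\le w_1^*$ by the rank-one interlacing you cite. Your final bounds $\eta\beta_{\min}w_N^*$ and $\eta\beta_{\max}w_1^*$ are products of extremes. In general these are looser than the paper's extremes of products, $\min_i w_i^*\beta_i$ and $\max_i w_i^*\beta_i$. They coincide here only because $\beta_i=\lambda/(1-w_i^*)$ and $w_i^*$ are comonotone, so both minima sit at $i=N$ and both maxima at $i=1$.

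The obstacle you flag in the lower bound is not real. The max--min form of Courant--Fischer with test subspace $B^{-1/2}T$ shows that the smallest nonzero eigenvalue of $B^{1/2}MB^{1/2}$ is at least $\min_{u\in T\setminus\{0\}}u^\top Mu/u^\top B^{-1}u\ge\beta_{\min}\nu_+$. Simpler still, pair the eigen-equation $\eta MBu=\mu u$ with $W^{-1}u$ rather than $Bu$. Since $u^\top W^{-1}M=u^\top$ for $u\in T$, this gives $\mu=\eta\,u^\top Bu/u^\top W^{-1}u$ directly. That is the paper's argument, and it makes interlacing unnecessary.
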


\begin{proof}
Let $F_i(w) = w_i(\bar h(w) - h_i(w_i))$ be the normalised mean field,
so $\mathbb{E}[\Delta w_i] = \eta\, F_i(w)$
(Lemma~\ref{lem:G1}).  At the interior equilibrium $w^*$, define
\[
    W = \operatorname{diag}(w_1^*,\ldots,w_N^*),
    \qquad
    D = \operatorname{diag}(d_1,\ldots,d_N),
    \qquad
    d_i = h_i'(w_i^*) = \frac{1-p_i}{(1-w_i^*)^2} > 0.
\]
For a tangent perturbation $z \in S = \{\mathbf{1}^\top z = 0\}$,
differentiate $F_i = w_i(\bar h - h_i)$ along the direction~$z$.
The variation of $\bar h(w) = \sum_j w_j h_j(w_j)$ is
\[
    \delta\bar h
    = \sum_j z_j\, h_j(w_j^*) \;+\; \sum_j w_j^*\, d_j\, z_j.
\]
At equilibrium all $h_j(w_j^*) = \lambda$ (the common multiplier of
Corollary~\ref{cor:explicit}), so the first sum is
$\lambda \sum_j z_j = 0$ on $S$, leaving
$\delta\bar h = w^{*\top} D z$.  Therefore
\[
    \delta F_i
    = w_i^*\bigl(\delta\bar h - d_i z_i\bigr)
    = w_i^*\bigl(w^{*\top} D z - d_i z_i\bigr),
\]
since $\bar h - h_i = 0$ at equilibrium kills the $z_i\,(\bar h - h_i)$
term.  Assembling coordinates,
\begin{equation}\label{eq:jacobian}
    Jz = W\bigl[\mathbf{1}\,(w^{*\top} D z) - D z\bigr].
\end{equation}

Equip $S$ with the weighted inner product
$\langle x, y\rangle_{W^{-1}} = x^\top W^{-1} y$.
For $x, y \in S$,
\[
    \langle x, Jy\rangle_{W^{-1}}
    = x^\top \bigl[\mathbf{1}\,(w^{*\top} D y) - D y\bigr]
    = -\,x^\top D y,
\]
since $x^\top \mathbf{1} = 0$ on $S$.  Hence $J|_S$ is self-adjoint in
this inner product, and for any nonzero $z \in S$,
\begin{equation}\label{eq:negative-definite}
    \langle z, Jz\rangle_{W^{-1}} = -\,z^\top D z < 0.
\end{equation}
Therefore every tangent eigenvalue is real and strictly negative, so
the linearised mean flow contracts toward $w^*$ on the simplex and
the equilibrium is locally exponentially stable.
\end{proof}

\begin{corollary}[Convergence rate]\label{cor:rate}
The local convergence rate of the mean flow on the simplex is
$\rho = \eta\,|\lambda_{\max}|$, where $\lambda_{\max}$ is the
least negative eigenvalue of $J|_S$.
The Rayleigh quotient
$\lambda = -\,z^\top D z\,/\,z^\top W^{-1} z$ gives the spectral
bounds
\begin{equation}\label{eq:rate-bound}
    \min_i\, w_i^*\, d_i
    \;\leq\; |\lambda_{\max}|
    \;\leq\; \max_i\, w_i^*\, d_i,
\end{equation}
where
$w_i^*\, d_i = (p_i{+}c)(1{+}c)/(1{-}p_i)$.
For $N=2$ the tangent space is one-dimensional and
$\lambda_1 = -\alpha = -(2-p_1-p_2)$, recovering the exact rate
$\rho = \eta\alpha$ from Theorem~\ref{thm:price}(a).
For large $N$ with all qualities equal to~$p$,
$\lambda = -N(1{-}p)/(N{-}1)^2 \approx -(1{-}p)/N$.
\end{corollary}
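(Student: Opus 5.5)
The plan is to read everything off the self-adjoint structure established in the proof of Theorem~\ref{thm:stability}. There, $J|_S$ is self-adjoint with respect to $\langle x,y\rangle_{W^{-1}}$ and satisfies $\langle z,Jz\rangle_{W^{-1}}=-z^\top Dz$. Its tangent eigenvalues are therefore the stationary values of the Rayleigh quotient
\[
R(z)=-\frac{z^\top D z}{z^\top W^{-1} z},\qquad z\in S\setminus\{0\},
\]
and every eigenvalue, in particular the least negative one $\lambda_{\max}$, lies in the range of $R$.

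\emph{Spectral bounds.} I write
\[
z^\top D z=\sum_i (w_i^* d_i)\,\frac{z_i^2}{w_i^*},\qquad z^\top W^{-1}z=\sum_i \frac{z_i^2}{w_i^*}.
\]
So $-R(z)$ is a convex combination of the numbers $w_i^*d_i$, with nonnegative weights proportional to $z_i^2/w_i^*$. Hence $-R(z)\in[\min_i w_i^*d_i,\max_i w_i^*d_i]$, and this gives~\eqref{eq:rate-bound}.

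\emph{Closed form for $w_i^* d_i$.} From Theorem~\ref{thm:price}(b), $1-w_i^*=(1-p_i)/(1+c)$. This gives $d_i=(1-p_i)/(1-w_i^*)^2=(1+c)^2/(1-p_i)$. Multiplying by $w_i^*=(p_i+c)/(1+c)$ gives $(p_i+c)(1+c)/(1-p_i)$. Interiority guarantees that all of these quantities are positive and finite.

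\emph{The rate $\rho$.} The mean recursion is $w\mapsto w+\eta F(w)$. Its linearisation on $S$ is $I+\eta J|_S$, which is also self-adjoint in the $W^{-1}$ product. Its eigenvalues are $1-\eta|\lambda|$, so the slowest tangent mode contracts by the factor $1-\eta|\lambda_{\max}|$ per round. To first order in $\eta$, and exactly for the continuous-time mean flow $\dot w=\eta F$, this is the rate $\rho=\eta|\lambda_{\max}|$. I expect the main obstacle to be stating this interpretation precisely. I would phrase $\rho$ as the exponential rate of the linearised flow, and note that for $\eta|\lambda|<1$ the discrete factor $1-\eta|\lambda_{\max}|$ gives the same leading behaviour.

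\emph{Special cases.} For $N=2$, $S$ is spanned by $(1,-1)$, so the single eigenvalue is
\[
\lambda_1=R(1,-1)=-\frac{d_1+d_2}{1/w_1^*+1/w_2^*}=-(d_1+d_2)\,w_1^*w_2^*.
\]
Here $1+c=2-p_1-p_2=\alpha$, $w_1^*=(1-p_2)/\alpha$, $w_2^*=(1-p_1)/\alpha$ and $d_i=\alpha^2/(1-p_i)$. Substituting gives $d_1w_1^*w_2^*=1-p_2$ and $d_2w_1^*w_2^*=1-p_1$, so $\lambda_1=-\alpha$. This matches the slope in~\eqref{eq:drift-n2}.

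For equal qualities $p$, we have $w_i^*=1/N$ and all $w_i^*d_i$ coincide, so the two bounds in~\eqref{eq:rate-bound} meet and pin down $\lambda$. Using $1+c=N(1-p)/(N-1)$, the common value is $(1+c)^2/(N(1-p))=N(1-p)/(N-1)^2$. This gives $\lambda=-N(1-p)/(N-1)^2\approx-(1-p)/N$.
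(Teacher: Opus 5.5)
Your proposal is correct and takes the route the paper indicates. The $W^{-1}$-self-adjointness from the proof of Theorem~\ref{thm:stability} makes every tangent eigenvalue a value of the Rayleigh quotient, and you bound that quotient as a weighted average of the $w_i^* d_i$; the closed form, the $N=2$ value $-\alpha$ and the equal-quality value $-N(1-p)/(N-1)^2$ all check out by direct substitution. Your caveat that the discrete contraction factor is $1-\eta|\lambda_{\max}|$ (matching $\rho=\eta|\lambda_{\max}|$ to first order, and exactly for the continuous-time flow) is slightly more careful than the paper, which states $\rho$ without qualification.
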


%% ============================================================
\subsection{Global convergence}
\label{sec:global-convergence}
%% ============================================================

Throughout this subsection assume $p_i<1$ for every child.  This
strict-failure assumption is needed for the strict convexity of the
potential and includes the empirical regime considered below.  Time
is rescaled so that $\eta$ is absorbed into the expected
dynamics.  The \emph{mean flow} on the open simplex is
$\dot w_i = w_i\bigl(\bar h(w) - h_i(w_i)\bigr)$ (Lemma~\ref{lem:G1});
the discrete statements below reinstate $\eta$ explicitly.  Nothing in
this subsection uses the quality gap $\Delta > 0$; equal qualities are
handled without change (Remark~\ref{rem:G-ties}).

\begin{lemma}[Potential and variance identity]\label{lem:G2}
Define, on $\{w \in \Delta^{N-1} : \max_i w_i < 1\}$,
\[
    \Phi(w) \;=\; -\sum_{i=1}^{N} (1-p_i)\,\log(1-w_i),
    \qquad
    \frac{\partial \Phi}{\partial w_i} = h_i(w_i).
\]
$\Phi$ is strictly convex, since its Hessian is diagonal with entries
$(1-p_i)/(1-w_i)^2 > 0$.  Along the mean flow $\dot w_i =
w_i(\bar h - h_i)$,
\[
    \frac{d\Phi}{dt}
    = \sum_i h_i\, w_i(\bar h - h_i)
    = \bar h^{\,2} - \sum_i w_i h_i^{2}
    = -\operatorname{Var}_w(h) \;\le\; 0,
\]
with equality if and only if $h_i = \bar h$ for every $i$ with
$w_i > 0$.
\end{lemma}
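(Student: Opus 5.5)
The plan is to verify the three claims of Lemma~\ref{lem:G2} (the gradient, strict convexity, and the variance identity) by direct differentiation, in that order.

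\emph{Gradient and convexity.} Since $p_i<1$ and $w_i<1$ on the stated domain, each summand $-(1-p_i)\log(1-w_i)$ is smooth. Its derivative is $(1-p_i)/(1-w_i)=h_i(w_i)$, and its second derivative is $(1-p_i)/(1-w_i)^2$, which is strictly positive by the standing strict-failure assumption. The function $\Phi$ is separable, so its Hessian is diagonal with these entries. It is therefore positive definite, and $\Phi$ is strictly convex on the convex domain, in particular along any tangent direction of the simplex.

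\emph{Variance identity.} Apply the chain rule along the mean flow of Lemma~\ref{lem:G1}:
\[
\frac{d\Phi}{dt}=\sum_i h_i\,\dot w_i=\sum_i h_i w_i(\bar h-h_i)=\bar h\sum_i w_ih_i-\sum_i w_ih_i^2 .
\]
Because $\sum_i w_ih_i=\bar h$ by definition, this equals $\bar h^2-\sum_i w_ih_i^2$. Since $w\in\Delta^{N-1}$ is a probability vector, this is minus the variance of $h$ under the distribution $w$, that is, $-\sum_i w_i(h_i-\bar h)^2$. This quantity is $\le 0$, with equality exactly when $h_i=\bar h$ on the support of $w$.

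\emph{Main obstacle.} The only delicate point is bookkeeping about the domain. The flow must stay inside $\{\max_i w_i<1\}$ so that $h_i$ is finite, and for interior initial data this follows from Theorem~\ref{thm:integrity}(c). The equality case must also be stated carefully: coordinates with $w_i=0$ contribute nothing to the variance, which is why the characterisation is restricted to the support of $w$. No further estimate is needed.
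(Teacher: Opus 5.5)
Your proof is correct and follows the paper's argument. The chain rule gives the first equality, the definition $\bar h=\sum_i w_ih_i$ gives the second, and $\sum_i w_i=1$ identifies $\bar h^{2}-\sum_i w_ih_i^{2}$ with $-\operatorname{Var}_w(h)=-\sum_i w_i(h_i-\bar h)^2$, whose vanishing is exactly the support condition; the separable diagonal Hessian gives strict convexity. One small correction to your domain remark: Theorem~\ref{thm:integrity}(c) concerns the discrete update, not the continuous mean flow. This costs nothing, because the lemma is a pointwise identity at each $w$ in the stated domain and needs no invariance statement.
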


\begin{proof}
The chain rule gives the first equality, $\sum_i w_i = 1$ the second,
and the third is the definition of the variance of $h$ under the
weight distribution, since $\bar h = \mathbb{E}_w[h]$.
\end{proof}

\begin{lemma}[Information-projection reading]\label{lem:G3}
On the simplex the slacks are conserved, $\sum_i (1-w_i) = N-1$, so
$\sigma_i = (1-w_i)/(N-1)$ is a probability distribution (the
\emph{slack distribution}), as is $\varphi_i = (1-p_i)/C$ with
$C = N - \sum_j p_j$ (the \emph{failure distribution}).  Then
\[
    \Phi(w) \;=\; C \cdot \mathrm{KL}(\varphi \,\|\, \sigma)
    \;+\; \text{const},
\]
so $\mathrm{KL}(\varphi\|\sigma(t))$ is non-increasing along the mean
flow.  Its unconstrained minimiser $\sigma = \varphi$ corresponds
exactly to $w_i = (p_i+c)/(1+c) = w_i^{*}$, using $1+c = C/(N-1)$.
The feasibility constraint $w_i \ge 0$ reads $\sigma_i \le 1/(N-1)$,
and $\varphi_i \le 1/(N-1)$ is exactly $p_i \ge \tau([N])$ with
\[
    \tau(B) \;:=\; \frac{\sum_{i\in B} p_i - 1}{|B|-1}.
\]
The interiority condition~\eqref{eq:interiority} is precisely
feasibility of the projection at an interior point.
\end{lemma}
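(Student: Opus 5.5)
The plan is to compute $\mathrm{KL}(\varphi\|\sigma)$ directly and compare it with $\Phi$. Write $\mathrm{KL}(\varphi\|\sigma)=\sum_i\varphi_i\log\varphi_i-\sum_i\varphi_i\log\sigma_i$. The first sum does not depend on $w$. For the second, substitute $\varphi_i=(1-p_i)/C$ and $\log\sigma_i=\log(1-w_i)-\log(N-1)$, which gives
\[
-C\sum_i\varphi_i\log\sigma_i=-\sum_i(1-p_i)\log(1-w_i)+C\log(N-1)=\Phi(w)+C\log(N-1).
\]
Hence $\Phi=C\cdot\mathrm{KL}(\varphi\|\sigma)+\text{const}$. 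Before this step I will check that $\sigma$ and $\varphi$ really are probability distributions. For $\sigma$, note $\sum_i(1-w_i)=N-1$ and $1-w_i>0$ on the domain $\max_i w_i<1$. For $\varphi$, note $C=N-\sum_j p_j>0$ because every $p_i<1$ (the standing assumption of this subsection), and each $1-p_i>0$. Since $C>0$ is a constant, monotonicity of KL along the mean flow follows at once from Lemma~\ref{lem:G2}, which gives $d\Phi/dt=-\operatorname{Var}_w(h)\le 0$.

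Next comes the minimiser. By Gibbs' inequality, $\mathrm{KL}(\varphi\|\sigma)\ge 0$, with equality exactly when $\sigma=\varphi$. Over all distributions $\sigma$, the unconstrained minimiser is therefore $\sigma=\varphi$, that is, $(1-w_i)/(N-1)=(1-p_i)/C$. Solving, $1-w_i=(1-p_i)/(1+c)$, where $1+c=1+(1-\sum_jp_j)/(N-1)=(N-\sum_jp_j)/(N-1)=C/(N-1)$. This gives $w_i=(p_i+c)/(1+c)$, which matches~\eqref{eq:explicit-eq} and Theorem~\ref{thm:price}(b).

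Finally, feasibility. The map $w\mapsto\sigma$ is an affine bijection between the hyperplane $\sum w_i=1$ and $\sum\sigma_i=1$. Under this map, $w_i\ge0$ becomes $\sigma_i\le1/(N-1)$, and $w_i>0$ becomes strict inequality. At the minimiser the condition reads $\varphi_i\le 1/(N-1)$, i.e.\ $(N-1)(1-p_i)\le N-\sum_jp_j$. This rearranges to $(N-1)p_i\ge\sum_jp_j-1$, which is $p_i\ge\tau([N])$. Taking the strict version for $i=N$, the smallest $p_i$, yields exactly~\eqref{eq:interiority}. The minimiser is therefore an interior feasible point precisely under interiority.

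There is no real obstacle here; the argument is bookkeeping. The points most likely to go wrong are sign conventions: keeping $C>0$ so that the constant multiplier preserves monotonicity, and correctly matching strict versus non-strict inequalities when translating interiority.
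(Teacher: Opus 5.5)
Your proposal is correct and takes essentially the paper's route. The paper's proof is the same direct substitution, $\Phi = -C\sum_i\varphi_i\log\bigl((N-1)\sigma_i\bigr) = C\,\mathrm{KL}(\varphi\|\sigma) + C H(\varphi) - C\log(N-1)$, with the minimiser and feasibility claims called one-line rearrangements; you write those out (Gibbs' inequality, the affine change of variables $w\mapsto\sigma$, and the check that $C>0$ under $p_i<1$).
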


\begin{proof}
Direct substitution:
$\Phi = -C \sum_i \varphi_i \log\!\bigl((N-1)\sigma_i\bigr)
= C\,\mathrm{KL}(\varphi\|\sigma) + C H(\varphi) - C\log(N-1)$.
The remaining statements are one-line rearrangements.
\end{proof}

\begin{theorem}[Global convergence]\label{thm:G7}
Let $\widehat w$ be the unique minimiser of the strictly convex
$\Phi$ over the closed simplex.  Then from every interior initial
condition the mean flow converges to $\widehat w$.

\emph{(a)}~If the interiority
condition~\eqref{eq:interiority} holds, $\widehat w = \mathbf{w}^{*}$
is interior, and the equilibrium of Theorem~\ref{thm:price}(b) is
globally asymptotically stable on $\mathrm{int}(\Delta^{N-1})$.

\emph{(b)}~In general,
\begin{equation}\label{eq:threshold}
    \widehat w_i = \frac{(p_i - \tau)_+}{1 - \tau},
    \qquad
    \sum_i (p_i - \tau)_+ = 1 - \tau,
\end{equation}
where $\tau$ is the unique threshold satisfying the water-filling
equation.  When all children are active,
$\tau = (\sum_i p_i - 1)/(N{-}1) = -c$ and~\eqref{eq:threshold}
reduces to $w_i^* = (p_i+c)/(1+c)$.
\end{theorem}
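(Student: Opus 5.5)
The plan is a Lyapunov argument built on Lemma~\ref{lem:G2}, with the limit identified through the KKT conditions for minimising $\Phi$ over the closed simplex.

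\emph{Step 1: existence and form of the minimiser.} The function $\Phi$ is continuous and strictly convex on the closed simplex. It tends to $+\infty$ only as some $w_i\to1$, and that cannot happen at a minimiser since $p_i<1$ for all $i$. So the minimiser $\widehat w$ exists, is unique, and satisfies $\max_i\widehat w_i<1$.

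The KKT conditions with multiplier $\lambda$ read
\[
h_i(\widehat w_i)=\lambda \ \text{ if } \widehat w_i>0,
\qquad
h_i(0)=1-p_i\ge\lambda \ \text{ if } \widehat w_i=0 .
\]
Solving $1-\widehat w_i=(1-p_i)/\lambda$ gives $\widehat w_i=(p_i-\tau)_+/(1-\tau)$ with $\lambda=1-\tau$. Normalisation then gives the water-filling equation~\eqref{eq:threshold}. The left side of that equation is continuous and nonincreasing in $\tau$, while the right side strictly decreases at rate one, so $\tau$ is unique.

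When the support is all of $[N]$, summing yields $\tau=(\sum p_i-1)/(N-1)=-c$. That support is feasible exactly under~\eqref{eq:interiority}, which gives part~(a)'s identification $\widehat w=w^*$ together with Theorem~\ref{thm:price}(b).

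\emph{Step 2: Lyapunov descent.} The potential $\Phi$ alone is not a good Lyapunov function when $\widehat w$ lies on the boundary, because rest points on faces also make $\operatorname{Var}_w(h)=0$. I would use instead
\[
L(w)=\sum_i \widehat w_i \log(\widehat w_i/w_i),
\]
the standard Lyapunov function for replicator-type dynamics. Along the flow,
\[
\dot L=-\sum_i\widehat w_i(\bar h-h_i)=\sum_i(\widehat w_i-w_i)h_i(w_i),
\]
using $\sum_i\widehat w_i=1$. By monotonicity of each $h_i$, this is at most $\sum_i(\widehat w_i-w_i)h_i(\widehat w_i)$, with strict inequality unless $w=\widehat w$; that bound is the convexity inequality for $\Phi$. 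By KKT,
\[
\sum_i(\widehat w_i-w_i)h_i(\widehat w_i)=\lambda-\sum_i w_ih_i(\widehat w_i)\le \lambda-\lambda\sum_i w_i=0,
\]
because $h_i(\widehat w_i)\ge\lambda$ for every $i$ (equality on the support, and $1-p_i\ge\lambda$ off it). Hence $\dot L<0$ for every interior $w\neq\widehat w$.

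\emph{Step 3: convergence.} Interior trajectories stay interior (the flow has the replicator form) and $L$ is finite there, so trajectories avoid the faces containing the support of $\widehat w$. A LaSalle argument finishes the proof. I would apply it on the compact sublevel set $\{L\le L(w_0)\}$, after checking that $\max_i w_i<1$ stays bounded away from $1$ along trajectories; $\Phi$ nonincreasing gives this. The strict decrease of $L$ then forces $w(t)\to\widehat w$.

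The main obstacle is this $\omega$-limit argument in the boundary case. The set $\{L\le L(w_0)\}$ includes boundary points outside the support of $\widehat w$, where $L$ is still finite. I must verify that $\dot L$, extended continuously to those points, remains strictly negative away from $\widehat w$. The KKT inequality above holds on the closed simplex, so this should go through, but the continuity of $h_i$ and the $0\log0$ conventions need care.
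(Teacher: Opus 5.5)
Your approach is essentially the paper's. The KKT conditions give the water-filling form, $D_{\mathrm{KL}}(\widehat w\,\Vert\,w)$ serves as a strict Lyapunov function, and convergence comes from a compact forward-invariant sublevel set. The differences are minor. You get $\dot L<0$ from strict monotonicity of each $h_i$ together with the KKT inequality $h_i(\widehat w_i)\ge\lambda$. The paper instead uses the function-value bound $\nabla\Phi(w)^\top(\widehat w-w)\le\Phi(\widehat w)-\Phi(w)<0$. You keep $\max_i w_i$ away from $1$ through monotonicity of $\Phi$, whereas the paper uses that $\widehat w$ has at least two positive coordinates.

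The boundary difficulty you flag in Step~3 does not arise. Both of your inequalities hold at every point of the closed simplex with $w_i>0$ on the support of $\widehat w$ and $\max_j w_j<1$. That region contains the whole sublevel set, so $\dot L$ vanishes there only at $\widehat w$. For part~(a) you should also state Lyapunov stability, not just attraction. It follows because the sublevel sets of $L$ (or of $\Phi$, as in the paper) shrink to $\widehat w$.

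The step that fails as written is the uniqueness of $\tau$. A nonincreasing left side and a right side decreasing at rate one can cross more than once, and here they do. At $\tau=1$ both sides vanish because every $p_i<1$, so the water-filling equation has the spurious root $\tau=1$, where your formula for $\widehat w_i$ is $0/0$. What excludes it is $\lambda=1-\tau>0$, which holds because $\lambda=h_i(\widehat w_i)>0$ on the support.

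Among $\tau<1$, the root is unique, for the following reason. Write $p_{(1)}\ge p_{(2)}$ for the two largest qualities. On $[p_{(2)},1)$ the left side equals $(p_{(1)}-\tau)_+<1-\tau$, so any admissible root lies below $p_{(2)}$. Below $p_{(2)}$ at least two terms are positive, so the left side has slope at most $-2$. The difference of the two sides is then strictly decreasing, which gives a unique root.

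Alternatively, argue as the paper does. Any root $\tau<1$ defines a point satisfying the KKT conditions, which are sufficient for this convex problem. That point is therefore $\widehat w$, and $\tau=1-h_i(\widehat w_i)$ is determined.
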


\begin{proof}
\emph{Threshold characterisation.}
The KKT conditions for minimising $\Phi$ over $\Delta^{N-1}$ give a
multiplier $\mu$ with $h_i(\widehat w_i) = \mu$ where
$\widehat w_i > 0$ and $h_k(0) = 1 - p_k \geq \mu$ where
$\widehat w_k = 0$.  Setting $\tau = 1 - \mu$, the active condition
gives $\widehat w_i = (p_i - \tau)/(1 - \tau)$ and the inactive
condition gives $p_i \leq \tau$.  The simplex constraint gives the
water-filling equation $\sum_i (p_i - \tau)_+ = 1 - \tau$.
Uniqueness of $\widehat w$ (strict convexity) implies uniqueness
of~$\tau$.  Under~\eqref{eq:interiority}, feasibility of the
unconstrained minimiser (Lemma~\ref{lem:G3}) gives
$\widehat w = \mathbf{w}^*$.

\emph{Strict KL Lyapunov function.}
Define
\[
    V(w) = D_{\mathrm{KL}}(\widehat w \,\Vert\, w)
         = \sum_{i:\,\widehat w_i > 0} \widehat w_i
           \log\frac{\widehat w_i}{w_i},
\]
with the convention $V(w) = +\infty$ whenever $w_i = 0$ for some
coordinate with $\widehat w_i > 0$.  Along the mean flow,
\[
    \dot V = -\sum_i \widehat w_i\,\frac{\dot w_i}{w_i}
           = -\sum_i \widehat w_i\,(\bar h - h_i)
           = \sum_i (\widehat w_i - w_i)\,h_i
           = \nabla\Phi(w)^\top (\widehat w - w).
\]
First-order convexity of $\Phi$ gives
\[
    \Phi(\widehat w) \;\geq\; \Phi(w) + \nabla\Phi(w)^\top
    (\widehat w - w),
\]
so
\[
    \dot V \;\leq\; \Phi(\widehat w) - \Phi(w) \;<\; 0
\]
for every $w \neq \widehat w$, by strict convexity.

\emph{Convergence.}
A finite $V$-sublevel set is compact and forward-invariant: every
coordinate with $\widehat w_i > 0$ is bounded away from zero (since
$V \leq M$ forces $w_i \geq \widehat w_i\, e^{-M/\widehat w_i} > 0$),
and $\widehat w$ has support of size at least two under $p_i < 1$,
so vertices are excluded.  Standard Lyapunov theory then gives
$w(t) \to \widehat w$ from every interior initial condition.
Lyapunov stability of $\widehat w$ follows from the sublevel sets
of $\Phi$ forming a neighbourhood basis of its minimiser.
\end{proof}

\begin{remark}[Rate]\label{rem:G-rate}
Near $\mathbf{w}^{*}$ the decay of $\Phi-\Phi(\mathbf{w}^{*})$ is
governed by the spectral gap of Corollary~\ref{cor:rate}: the
instantaneous decay rate is $\operatorname{Var}_w(h)$, whose
linearisation at $\mathbf{w}^{*}$ recovers the eigenvalue analysis
of Theorem~\ref{thm:stability}.  Theorem~\ref{thm:G7} supplies
global asymptotic convergence; the exponential rate statement
remains local.  No global exponential rate is claimed.
\end{remark}

\begin{remark}[Ties in quality]\label{rem:G-ties}
Nothing in this subsection uses the quality gap $\Delta>0$.  Equal
qualities receive equal equilibrium weights, and the convergence
statements are unchanged; only the strict ordering claim of
Theorem~\ref{thm:price}(b) becomes weak.
\end{remark}

\begin{figure}[t]
    \centering
    \includegraphics[width=\textwidth]{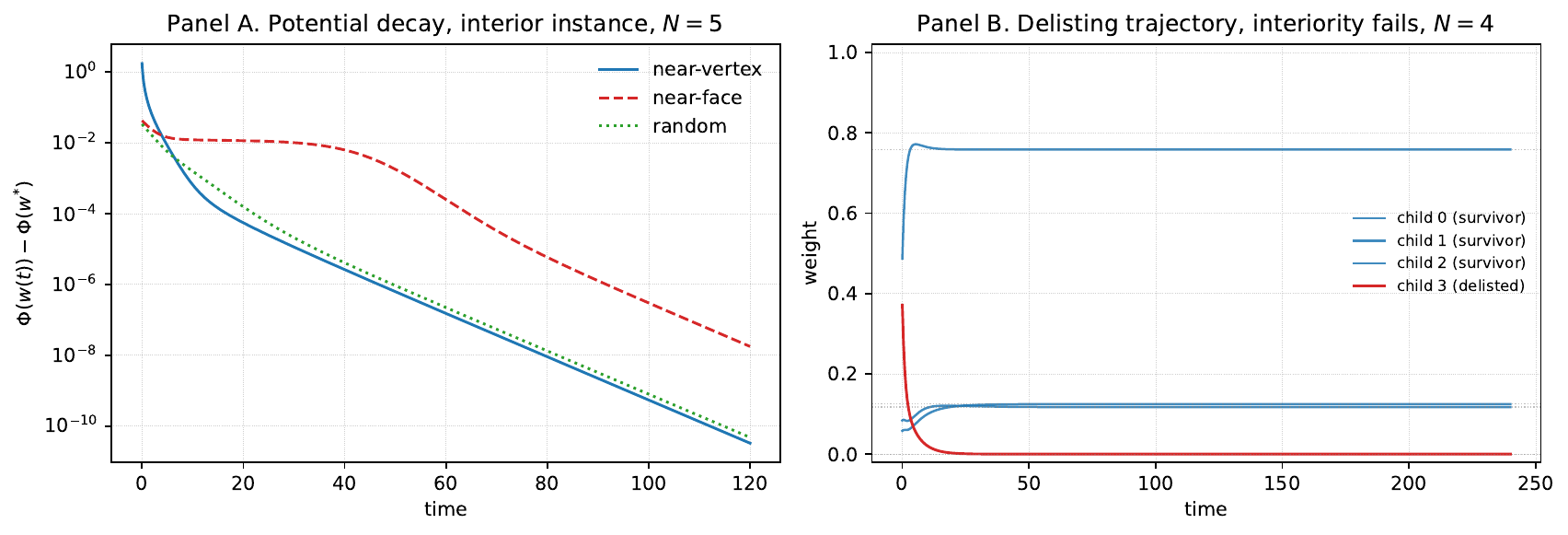}
    \caption{Global convergence, illustration.  Panel~A shows the
    potential gap $\Phi(w(t))-\Phi(\widehat w)$ on a log scale from
    three adversarial starts (near-vertex, near-face, random) of a
    $5$-child interior-admissible instance.  All three trajectories
    decay monotonically to the interior equilibrium.  Panel~B shows
    a $4$-child instance for which the interiority condition fails.
    The weight of the below-threshold child converges to zero, and
    the three survivors reach the affine projection on
    the surviving support.}
    \label{fig:global-convergence}
\end{figure}

\begin{remark}[Equilibrium cost]\label{rem:eq-cost}
At equilibrium, the per-round expected welfare loss from not always
selecting the best child is
$c_{\mathrm{eq}} = \sum_{j \geq 2}(p_1 - p_j)\,w_j^*$.
For $N{=}2$, substituting~\eqref{eq:w-star-exact}:
\begin{equation}\label{eq:ceq-n2}
    c_{\mathrm{eq}} = \frac{\Delta(1{-}p^*)}{\Delta + 2(1{-}p^*)}.
\end{equation}
Since
\[
    \frac{\partial c_{\mathrm{eq}}}{\partial \Delta}
    = \frac{2(1{-}p^*)^2}{\bigl[\Delta + 2(1{-}p^*)\bigr]^2}
    > 0,
\]
$c_{\mathrm{eq}}$ is strictly increasing in $\Delta$, so on the
admissible range $\Delta \in [0, p^*]$ the fractional loss satisfies
$c_{\mathrm{eq}}/p^* \leq (1-p^*)/(2-p^*)$, attained at $\Delta = p^*$.
This bound is decreasing in $p^*$, so over the intended operating
regime ($p^* \geq 0.95$, $\Delta \geq 0.3$),
$c_{\mathrm{eq}}/p^* \leq 1/21 \approx 4.76\%$.
This equilibrium loss is independent of~$\eta$.  The adjustment rate
controls adaptation speed and, under constant steps, stochastic
dispersion around the equilibrium; it does not change
$c_{\mathrm{eq}}$.  A decreasing schedule converges to the same
$\mathbf{w}^*$ rather than driving the equilibrium loss to zero, and
becomes progressively less responsive to later quality changes.
\end{remark}

\begin{remark}[Non-stationary responsiveness]\label{rem:tracking}
The expected drift~\eqref{eq:drift-general} depends only on the
current weights~$\mathbf{w}(t)$ and current qualities~$\{p_j(t)\}$.
When qualities change at round~$t_0$, the drift immediately points
toward the new fixed point at round~$t_0 + 1$.  No change detection
is needed.  Constant~$\eta$ means exponential forgetting: old evidence
is geometrically downweighted by the ongoing stream of new marginal
comparisons.
\end{remark}

%% ============================================================
\section{Hierarchical Composition}
\label{sec:hierarchical}
%% ============================================================

\begin{theorem}[Marginal composition]\label{thm:hierarchy}
Let selector~$v$ have fixed child outcome probabilities on its
active-round subsequence and quality gap $\Delta_v>0$.
\begin{enumerate}
    \item[\emph{(a)}] \textbf{Local transition law.}
    Conditional on an active round, node~$v$ updates from only its
    local weight vector, selected child, and binary signal
    $s_v=o_t$ (Theorem~\ref{thm:signal}).  Its transition map is
    therefore the standalone single-selector map analysed in
    Theorem~\ref{thm:price}.

    \item[\emph{(b)}] \textbf{Selected subsequence.}
    Ancestors determine which rounds reach~$v$.  Conditional on the
    pre-round history and input,
    \[
      P(v\text{ active at }t\mid\mathcal F_{t-1},x_t)
      =\prod_{(u,c)\in\mathrm{path}(r,v)}
       w_{u\to c}^{\kappa_u(x_t)}(t).
    \]
    This selection can change the distribution of inputs on~$v$'s
    active subsequence, but cannot corrupt the pointwise identity
    $s_v=o_t$.

    \item[\emph{(c)}] \textbf{Path-probability composition.}
    A leaf's conditional selection probability factorises into the
    local allocations along its single root-to-leaf path.  This
    factorisation does not assert statistical independence of node
    states, active clocks, or outcomes.
\end{enumerate}
\end{theorem}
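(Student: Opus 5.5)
The plan is to prove the three parts in order, each by unwinding Definition~\ref{def:protocol} and Definition~\ref{def:active-path}. No new analysis is needed; the content is bookkeeping about which information enters each update.

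\emph{Part (a).} Condition on $A_{v,t}$. By Theorem~\ref{thm:signal}, $s_v(t)=o_t$ pointwise on this event. Step~5 of Definition~\ref{def:protocol} then applies \eqref{eq:pos-selected}--\eqref{eq:neg-sibling} to $\mathbf w_v^{k}$ with inputs $(\mathbf w_v^k(t), c_t, s_v(t))$ only, which is exactly the set allowed by C3. I would then compute the conditional law of the update given $\mathcal F_{t-1}$, $x_t$ and $A_{v,t}$:
\begin{itemize}
\item $c_t$ is drawn from $\mathbf w_v^{\kappa_v(x_t)}(t)$ by \eqref{eq:selection};
\item given $c_t=i$, the bit $s_v=o_t$ is Bernoulli with the child outcome probability $p_i$, which by hypothesis is fixed on the active subsequence.
\end{itemize}
This is precisely the single-selector kernel whose expected drift is Lemma~\ref{lem:G1}. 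Hence the transition map, and therefore the equilibrium of Theorem~\ref{thm:price}, is the standalone one.

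\emph{Part (b).} Argue by induction along $\mathrm{path}(r,v)=(r=u_0,u_1,\dots,u_m=v)$. Node $u_{j+1}$ is active iff $u_j$ is active and $u_j$ selects $u_{j+1}$. Given $\mathcal F_{t-1}$ and $x_t$, that selection has probability $w^{\kappa_{u_j}(x_t)}_{u_j\to u_{j+1}}(t)$, drawn using pre-round weights independently of the draws at ancestors. Multiplying these probabilities gives the product formula. For the non-corruption claim, note that whatever event selects the round, Theorem~\ref{thm:signal} holds pointwise on $A_{v,t}$; conditioning on the ancestors' selection therefore reweights the input law seen by $v$ but cannot change the value of $s_v$.

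\emph{Part (c)} is the case of (b) where $v$ is a leaf. I would add a sentence stating that the identity is a conditional product of pre-round weights. It does not say that weight vectors at different nodes are independent, since they are coupled through the shared outcomes $o_t$.

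The main obstacle is making (a) precise. The fixed-probability hypothesis must be read as conditional on the active subsequence, since the ancestors' selection can tilt the input distribution. I would state the conditioning explicitly so that $p_i$ denotes $P(o_t=1\mid c_t=i, A_{v,t},\mathcal F_{t-1})$, and then invoke Lemma~\ref{lem:G1} unchanged.
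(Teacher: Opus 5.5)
Your proposal is correct and follows the paper's proof essentially step for step. Part~(a) uses Theorem~\ref{thm:signal} together with the locality of the update rule, part~(b) applies the conditional-probability chain rule along the root-to-$v$ path, and part~(c) specialises to a leaf with the non-independence caveat. Your explicit conditional kernel, and your reading of the fixed-probability hypothesis as $P(o_t=1\mid c_t=i,A_{v,t},\mathcal F_{t-1})$, are slightly more careful than the paper's wording. You should also state one point the paper makes explicitly: step~2 of Definition~\ref{def:protocol} leaves $\mathbf w_v$ unchanged on inactive rounds. This is what lets the process on $v$'s active subsequence be identified with the standalone single-selector chain.
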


\begin{proof}

\emph{(a) Local transition law.}
On an active round at node~$v$, by Theorem~\ref{thm:signal},
$v$~receives binary signal $s_v=o_t$.  The equality is pointwise and
does not depend on depth or ancestor stabilisation.
By Theorem~\ref{thm:integrity}, $v$'s update rule is
simplex-preserving proportional redistribution using only its current
weights, selected child, and the binary signal.  Thus the conditional
transition function is the standalone root-level function.  Under the
stated fixed active-round outcome probabilities, the single-selector
equilibrium analysis of Theorem~\ref{thm:price} applies to this local
process.

\emph{(b) Selected subsequence.}
By the chain rule of conditional probability, node~$v$ is active if
and only if every ancestor on the root-to-$v$ path selects the next
node.  Conditional on $\mathcal F_{t-1}$ and~$x_t$, each factor is the
relevant context-specific weight, which gives the product in the
statement.  Ancestor selection may thin the clock non-uniformly and
thereby alter the distribution of inputs and outcomes observed by~$v$.
It does not enter~$v$'s local transition function after conditioning
on the realised active round, and the signal content is correct from
round~1.

On inactive rounds ($v$ not selected), $v$'s weights do not change
(the update rule acts only on the selected child at each level).

\emph{(c) Path-probability composition.}
Leaf~$\ell$ is selected if and only if every internal node on the
root-to-$\ell$ path selects the next node on the path.  Applying the
conditional-probability chain rule gives
\[
    P(\ell \text{ selected at }t\mid\mathcal F_{t-1},x_t)
    = \prod_{(u,c) \in \mathrm{path}(r,\ell)}
      w_{u \to c}^{\kappa_u(x_t)}(t).
\]
The factors can be statistically coupled because the corresponding
selectors share inputs and the same root outcome, and their active
clocks are nested.  The result concerns the one selected path.  It
does not identify counterfactual credit for unselected children or
divide an aggregate outcome among simultaneous contributors.
\end{proof}

%% ============================================================
\section{Response to Signal Distortion}
\label{sec:distortion}
%% ============================================================

A selector receives one bit and passes one bit down.  It can pass the
bit unchanged, or it can distort it.  The binary channel supports
exactly four deterministic memoryless maps
$f\colon\{0,1\}\to\{0,1\}$, namely identity, negation and the two
constants, so the information-minimal interface leaves a four-element
distortion space.  This section asks how the allocation responds.

The answer is a comparative-statics statement at a single selector.
Honest transmission delivers strictly higher mean quality upward than
either constant transmission or negation, and the gap is proportional
to the dispersion of the children's qualities.  The statement needs no
payoffs, transfers or equilibrium concepts beyond the allocation response
already established in Theorem~\ref{thm:G7}, and none are introduced.
Section~\ref{sec:distortion-scope} states what it does not cover.

\subsection{Pairwise covariance identity and global ordering}
\label{sec:local-ordering}

The ordering below rests on a single algebraic identity.

\begin{lemma}[Pairwise covariance identity]\label{lem:pwcov}
For any probability vector $w$ and any quality vector $q$,
\begin{equation}\label{eq:pwcov}
    \sum_i w_i q_i \;-\; \bar q
    \;=\;
    \frac{1}{N}\sum_{i<j}(w_i - w_j)(q_i - q_j),
    \qquad
    \bar q \;=\; \frac{1}{N}\sum_i q_i.
\end{equation}
\end{lemma}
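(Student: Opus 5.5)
The plan is to expand the right-hand side of \eqref{eq:pwcov} directly and reduce it to the left-hand side using only $\sum_i w_i = 1$. Because the summand $(w_i - w_j)(q_i - q_j)$ is symmetric in $(i,j)$ and vanishes on the diagonal $i=j$, the first step is to rewrite the sum over unordered pairs as half the sum over all ordered pairs:
\[
    \sum_{i<j}(w_i - w_j)(q_i - q_j)
    \;=\; \frac12 \sum_{i=1}^{N}\sum_{j=1}^{N}(w_i - w_j)(q_i - q_j).
\]
Working with the unrestricted double sum avoids any bookkeeping of index order.

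Next, multiply out the summand as $w_i q_i - w_i q_j - w_j q_i + w_j q_j$ and sum each term over both indices. The two diagonal-type terms each contribute $N\sum_i w_i q_i$, and the two cross terms each contribute $\bigl(\sum_i w_i\bigr)\bigl(\sum_j q_j\bigr) = N\bar q$, using the normalisation $\sum_i w_i = 1$. This gives a full double sum of $2N\sum_i w_i q_i - 2N\bar q$. Halving it and then dividing by $N$ leaves exactly $\sum_i w_i q_i - \bar q$, which is the left-hand side.

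There is no real obstacle here; the identity is a discrete form of $\operatorname{Cov}$ written as a pairwise sum. The only points that need care are the factor $\tfrac12$ when passing between the ordered and unordered sums, and the observation that $\sum_i w_i = 1$ is the \emph{only} property of $w$ used. In particular, nonnegativity of $w$ is not needed, so the identity holds for any real vector summing to one. I would note this explicitly, because the later ordering argument will apply the identity to differences between allocation responses. It may also help to remark that $q$ is completely arbitrary, so the identity can be applied with $q$ equal to the child qualities $p$.
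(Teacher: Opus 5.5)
Your proof is correct and is essentially the paper's argument: both expand the right-hand side and use only $\sum_i w_i = 1$. Your halved ordered double sum just does the bookkeeping more cleanly than the paper's direct count of how often each $w_k q_k$ and each cross term $w_i q_j$ appears among the unordered pairs. One small caution on your closing aside: the paper applies the identity to allocation vectors themselves, not to differences of allocations, and such a difference sums to $0$ rather than $1$; in general the left side is $\sum_i w_i q_i - (\sum_i w_i)\,\bar q$, so for a difference vector the $\bar q$ term drops out.
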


\begin{proof}
Expanding the right-hand side,
\[
    \frac{1}{N}\sum_{i<j}(w_i - w_j)(q_i - q_j)
    = \frac{1}{N}\sum_{i<j}(w_i q_i + w_j q_j - w_i q_j - w_j q_i).
\]
Each $w_k q_k$ appears in $N{-}1$ pairs, contributing
$(N{-}1)\sum_i w_i q_i/N$; each cross term $w_i q_j$ ($i\neq j$)
appears once, contributing $-\bigl(\bar q - w_i q_i\bigr)$ after
summing over~$j\neq i$.  Collecting terms gives
$\sum_i w_i q_i - (1/N)\sum_i q_i = \sum_i w_i q_i - \bar q$,
using $\sum_i w_i = 1$.
\end{proof}

The identity is sign-transparent: if $w$ is comonotone with $q$
(i.e.\ $(w_i-w_j)(q_i-q_j)\ge 0$ for every pair, with at least one
strict inequality when $q$ is nonconstant), then
$\sum_i w_i q_i > \bar q$; if $w$ is anti-comonotone with $q$, the
sum is $<\bar q$; if $w$ is uniform, both sides vanish.  No
interiority, derivative, or normaliser is required for the sign
statement.

\begin{lemma}[Global allocation monotonicity]\label{lem:global-monotone}
Fix a selector with $N\ge 2$ children, hold all sibling effective
qualities fixed, and let $q_v$ vary over $(0,1)$.  The equilibrium
allocation $\widehat w_v(q_v)$ that the parent assigns to
$v$ is continuous and globally nondecreasing in $q_v$, and strictly
increasing on each open region where $v$ has positive allocation.
Concretely, on any surviving support $A\subseteq[N]$ with $v\in A$
and $m=|A|\ge 2$,
\begin{equation}\label{eq:face-derivative}
    \frac{\partial w^{A}_v}{\partial q_v}
    \;=\; \frac{(m-1)\;-\;\sum_{j\in A,\,j\neq v} q_j}
               {(m-1)\,(1+c_A)^2}
    \;>\; 0,
    \qquad
    c_A \;=\; \frac{1-\sum_{j\in A} q_j}{m-1}.
\end{equation}
\end{lemma}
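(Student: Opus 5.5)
The plan is to use the threshold characterisation of Theorem~\ref{thm:G7}(b). For a given quality vector, the equilibrium allocation is $\widehat w_i = (q_i-\tau)_+/(1-\tau)$, where $\tau$ is fixed by the water-filling equation $\sum_i (q_i-\tau)_+ = 1-\tau$. The argument has three parts: continuity, the derivative on each face, and the global monotonicity obtained by gluing faces together.

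\emph{Continuity.} The minimiser $\widehat w$ of the strictly convex $\Phi$ over the compact simplex is unique, and $\Phi$ is jointly continuous in $(w,q)$ on the relevant domain. By Berge's maximum theorem, $\widehat w$ is then continuous in $q_v$. An alternative route is to show $\tau(q)$ is continuous directly. The map $g(\tau) = \sum_i (q_i-\tau)_+ - (1-\tau)$ is continuous in $(q,\tau)$. On the relevant range it is strictly decreasing in $\tau$, because its slope is $-|\{i : q_i > \tau\}| + 1 \le -1$ whenever at least two children are active. So its root varies continuously, and $\widehat w$ inherits continuity from the closed form.

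\emph{Face derivative.} Fix a support $A$ with $v\in A$ and $m=|A|\ge 2$. On $A$ we have $\tau = -c_A$, so
\[
w^A_v = \frac{q_v + c_A}{1+c_A}, \qquad \frac{\partial c_A}{\partial q_v} = -\frac{1}{m-1}.
\]
The quotient rule gives
\[
\frac{\partial w^A_v}{\partial q_v} = \frac{(1+c_A)\bigl(1-\tfrac{1}{m-1}\bigr) + \tfrac{1}{m-1}(q_v+c_A)}{(1+c_A)^2}.
\]
Multiplying the numerator by $m-1$ gives $(m-2)(1+c_A) + q_v + c_A = (m-1)(1+c_A) - (1-q_v)$. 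Substituting $(m-1)(1+c_A) = m - \sum_{j\in A} q_j$ turns this into $(m-1) - \sum_{j\in A, j\neq v} q_j$, which matches \eqref{eq:face-derivative}. The numerator is positive because each $q_j<1$ and there are $m-1$ such terms. The denominator is positive because $1+c_A = (m-\sum_A q_j)/(m-1) > 0$. This proves strict increase on every open region where $v$ is active and the support stays constant.

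\emph{Gluing.} The parameter interval $(0,1)$ splits into finitely many intervals on which the support is constant. This holds because $\tau$ is continuous and piecewise given by the finitely many formulas $-c_A$, so support changes occur only where some $q_i$ equals $\tau$.

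\begin{itemize}
\item On intervals where $v\notin A$, we have $\widehat w_v = 0$, which is constant.
\item On intervals where $v\in A$, $\widehat w_v$ is strictly increasing by the face derivative.
\end{itemize}

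Continuity at the breakpoints makes the glued function nondecreasing globally.

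The main obstacle is making the gluing rigorous. This means justifying that there are only finitely many support changes, and that $v$'s region of positive allocation is an up-set in $q_v$. For the up-set property, I would argue as follows. If $q_v\le\tau$ at some value of $q_v$, then lowering $q_v$ leaves the water-filling solution unchanged, because $v$ is inactive and does not enter the equation. So $v$ stays excluded for all smaller $q_v$. Consequently the set where $\widehat w_v>0$ is an interval of the form $(q_0,1)$, and there is a single transition point at which the allocation rises continuously from zero.
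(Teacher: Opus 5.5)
Your proposal is correct and follows essentially the paper's route. Like the paper, you use the threshold formula of Theorem~\ref{thm:G7} on a fixed support, differentiate $w^A_v=(q_v+c_A)/(1+c_A)$ to obtain \eqref{eq:face-derivative}, and glue the pieces by continuity across support changes; your treatment of continuity and of the finitely many support intervals is more careful than the paper's one-line remark that an entering or exiting child's numerator $(q_i-\tau)_+$ vanishes at the crossing, and of your two continuity routes the root-continuity one is cleaner, because $\Phi$ is infinite at the vertices and Berge's theorem would first need a compact domain bounded away from them.
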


\begin{proof}
Fix a support $A\ni v$ with $m=|A|\ge 2$ on which $v$ is active.  By
the threshold formula~\eqref{eq:threshold} of Theorem~\ref{thm:G7},
the equilibrium on $A$ is
$w^{A}_i = (q_i + c_A)/(1+c_A)$ with
$c_A = (1-\sum_{j\in A}q_j)/(m-1)$, so
$\partial c_A/\partial q_v = -1/(m-1)$.  Holding siblings fixed and
differentiating $w^{A}_v = (q_v + c_A)/(1+c_A)$,
\[
    \frac{\partial w^{A}_v}{\partial q_v}
    = \frac{(1+c_A) + (1-q_v)\,\partial c_A/\partial q_v}{(1+c_A)^2}
    = \frac{(m-1)(1+c_A) - (1-q_v)}{(m-1)(1+c_A)^2},
\]
and using $(m{-}1)(1+c_A) = m - \sum_{j\in A} q_j$ the numerator
becomes $(m-1) - \sum_{j\in A,\,j\neq v} q_j$, giving
\eqref{eq:face-derivative}.  It is strictly positive: there are
$m-1$ siblings, each with $q_j < 1$ (binary quality), so
$\sum_{j\in A,\,j\neq v}q_j < m-1$.  The denominator is strictly
positive since $1+c_A = (m-\sum_{j\in A}q_j)/(m-1) > 0$.

As $q_v$ varies, siblings as well as $v$ may enter or leave the
active support.  The response remains continuous because at a
threshold crossing the entering or exiting child's numerator in
$\widehat w_i = (q_i - \tau)_+/(1-\tau)$ is zero.  Combining the
strictly positive derivative on each interior piece with continuity
across support transitions gives the global statement.
\end{proof}

\begin{definition}[Delivered quality under a distortion map]
\label{def:delivered-quality}
Fix a selector~$v$ with $N \geq 2$ children whose qualities
$q_1, \ldots, q_N$ are exogenous and fixed.  Let $v$ apply a
map $f \in \{\mathrm{id}, \mathrm{neg}, \mathrm{const}\}$ to the bit it
receives before running its own update, and write
\[
    Q_v(f) \;=\; \sum_i \widehat w_{v,i}(f)\, q_i
\]
for the mean quality $v$ then delivers upward, where
$\widehat{\mathbf w}_v(f)$ is the equilibrium allocation of
Theorem~\ref{thm:G7} induced by~$f$.  Identity presents $\mathbf q$
unchanged, negation presents $\mathbf 1 - \mathbf q$, and either
constant gives uniform weights.  The children's qualities do not depend on~$f$,
so $Q_v$ is a well-defined function of the map alone.
\end{definition}

The definition is deliberately local.  It fixes one selector and treats
its children's qualities as exogenous, which is exactly the setting in
which Theorem~\ref{thm:price} applies.  It makes no claim about what
happens when several levels distort at once, since a selector's map
changes the bit its own children receive and therefore changes their
subsequent behaviour.  That coupled problem is left open.

\begin{corollary}[Policy ordering]\label{lem:qual-ordering}
In the setting of Definition~\ref{def:delivered-quality}, suppose the
qualities $q_1, \ldots, q_N$ are not all equal, and write
$\bar q = (1/N)\sum_i q_i$.  Then, as equilibrium-response
statements,
\[
    Q_v(\mathrm{id}) \;>\; \bar q,
    \qquad
    Q_v(\mathrm{const}) \;=\; \bar q,
    \qquad
    Q_v(\mathrm{neg}) \;<\; \bar q,
\]
hence
\begin{equation}\label{eq:ordering}
    Q_v(\mathrm{id}) \;>\; Q_v(\mathrm{const})
    \;>\; Q_v(\mathrm{neg}).
\end{equation}
\end{corollary}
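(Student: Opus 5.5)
The plan is to reduce all three claims to the sign-transparent identity of Lemma~\ref{lem:pwcov}, applied to $w=\widehat{\mathbf w}_v(f)$ and the fixed true qualities $\mathbf q$. Under $f=\mathrm{id}$ the selector's update is driven by the true outcome, so by Theorem~\ref{thm:G7} $\widehat{\mathbf w}_v(\mathrm{id})$ is the threshold allocation~\eqref{eq:threshold} computed from $\mathbf q$. Under $f=\mathrm{neg}$ the selector sees the outcome probabilities $\mathbf 1-\mathbf q$, so the equilibrium is the same map applied to $\mathbf 1-\mathbf q$. Under a constant map, Definition~\ref{def:delivered-quality} already gives uniform weights. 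In that case Lemma~\ref{lem:pwcov} has every pairwise difference $w_i-w_j=0$, so $Q_v(\mathrm{const})=\bar q$ immediately.

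For identity, we must show that $\widehat{\mathbf w}(\mathbf q)$ is comonotone with $\mathbf q$, with at least one strict pair. From~\eqref{eq:threshold}, $\widehat w_i=(q_i-\tau)_+/(1-\tau)$ with the common denominator $1-\tau>0$; this needs $\tau<1$, which follows from $q_i<1$. The map $x\mapsto(x-\tau)_+$ is nondecreasing, so $(\widehat w_i-\widehat w_j)(q_i-q_j)\ge 0$ for every pair. Strictness is where care is needed. Let $i^\ast$ be a child of maximal quality and $k$ one of minimal quality, so $q_{i^\ast}>q_k$ because the qualities are not all equal. Since $\widehat{\mathbf w}$ is a probability vector, $q_{i^\ast}>\tau$. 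Then either $q_k>\tau$, in which case $\widehat w_{i^\ast}-\widehat w_k=(q_{i^\ast}-q_k)/(1-\tau)>0$, or $q_k\le\tau$, in which case $\widehat w_k=0<\widehat w_{i^\ast}$. Either way the pair $(i^\ast,k)$ contributes strictly positively, and Lemma~\ref{lem:pwcov} gives $Q_v(\mathrm{id})>\bar q$. Lemma~\ref{lem:global-monotone} gives the same monotonicity, but the direct threshold argument is cleaner because it avoids varying one coordinate at a time.

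For negation, apply the same argument to $\mathbf q'=\mathbf 1-\mathbf q$, which is also nonconstant. The resulting weights are comonotone with $\mathbf q'$, hence anti-comonotone with $\mathbf q$, and some pair is strict. The identity then gives $Q_v(\mathrm{neg})<\bar q$. The chain~\eqref{eq:ordering} follows by combining the three relations.

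The main obstacle is a boundary case in the negation branch: Theorem~\ref{thm:G7} assumes $p_i<1$, and $1-q_i<1$ requires $q_i>0$. I would handle this by assuming $q_i\in(0,1)$, as in Lemma~\ref{lem:global-monotone}. The justification I would give is that the rule's behaviour at the vertex, where a child has outcome probability one, is the degenerate limit excluded by Theorem~\ref{thm:integrity}, so interior qualities are the natural domain of the statement. The strictness step for identity also relies on $\tau<1$, which holds under the same assumption.
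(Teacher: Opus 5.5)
Your proposal is correct and follows the paper's proof: both apply the pairwise covariance identity to the threshold equilibrium of Theorem~\ref{thm:G7} computed from $\mathbf q$ (identity) and from $\mathbf 1-\mathbf q$ (negation), with the constant maps giving uniform weights. Your maximal--minimal pair makes explicit the strictness step that the paper only asserts. Your restriction to $q_i\in(0,1)$ correctly flags that negation needs $q_i>0$ to meet the strict-failure hypothesis $p_i<1$ of Theorem~\ref{thm:G7}, which is the right justification; Theorem~\ref{thm:integrity} concerns weights rather than qualities.
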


\begin{proof}
\emph{Identity.}  Under honest transmission the equilibrium weights
of Theorem~\ref{thm:G7} are comonotone with $q$: wherever $v$ is
active, $w^{A}_i = (q_i + c_A)/(1+c_A)$ is strictly increasing in
$q_i$, and the support itself preserves higher-quality children
($q_i > \tau$) before lower.  Hence
$(w_i-w_j)(q_i-q_j)\ge 0$ for every pair, strictly for at least one
pair when $q$ is nonconstant, and the identity~\eqref{eq:pwcov}
gives $Q_v(\mathrm{id}) > \bar q$.

\emph{Constants (equilibrium response).}  Under constant~1
($f\equiv 1$), every child presents apparent quality~$1$, so
$h_i = 0$ and the mean field vanishes identically at \emph{every}
state; starting from the stipulated uniform initialisation, the
mean-flow response therefore remains uniform.  Under constant~0
($f\equiv 0$), every child presents apparent quality~$0$; the
symmetric instance has the uniform equilibrium, and the stipulated
uniform initialisation selects that response.  In both cases the
equilibrium-response weight vector of
Definition~\ref{def:delivered-quality} is
$u=(1/N,\ldots,1/N)$.  (The realised stochastic trajectory does not
remain exactly at $u$ pathwise; the statement is about the mean-flow
response.)  The true delivered quality under the uniform
allocation is
$Q_v(\mathrm{const}) = \sum_i (1/N)\,q_i = \bar q$, and the
identity~\eqref{eq:pwcov} also vanishes term-by-term.

\emph{Negation.}  Under negation, $v$'s allocation rule sees the inverted
quality vector $q' = (1{-}q_1,\ldots,1{-}q_N)$.  The equilibrium
weights are comonotone with $q'$, hence anti-comonotone with $q$:
$(w_i-w_j)(q_i-q_j)\le 0$ for every pair, strictly for at least one
pair when $q$ is nonconstant.  The identity~\eqref{eq:pwcov} gives
$Q_v(\mathrm{neg}) < \bar q$.
\end{proof}

The ordering~\eqref{eq:ordering} is global: it holds at every
selector whose effective child qualities are nonconstant, interior or
not, because the threshold equilibrium of Theorem~\ref{thm:G7} is
comonotone with $q$ on every surviving support.  No interiority
condition is required for the sign of the identity.

\begin{corollary}[Interior dispersion formula]\label{cor:dispersion}
When the undistorted allocation is interior
(equivalently~\eqref{eq:interiority} holds), substituting
$w_i - w_j = (q_i - q_j)/(1+c)$ into~\eqref{eq:pwcov} gives the
closed form
\begin{equation}\label{eq:dispersion}
    Q_v(\mathrm{id}) - \bar q
    \;=\; \frac{\sum_{i<j}(q_i - q_j)^2}{N(1+c)}
    \;>\; 0,
    \qquad
    c \;=\; \frac{1-\sum_i q_i}{N-1}.
\end{equation}
\end{corollary}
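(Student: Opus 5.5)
The plan is to substitute the explicit interior equilibrium of Corollary~\ref{cor:explicit} into the pairwise covariance identity of Lemma~\ref{lem:pwcov}. Everything needed is already in place.

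First, check that the undistorted allocation is the interior equilibrium. Under honest transmission, Definition~\ref{def:delivered-quality} presents the unchanged quality vector $\mathbf q$ to $v$'s update. The equilibrium response is then the minimiser $\widehat{\mathbf w}_v(\mathrm{id})$ of Theorem~\ref{thm:G7}. Under the interiority condition~\eqref{eq:interiority}, part~(a) of that theorem identifies this minimiser with $\mathbf w^*$, where $w_i^* = (q_i + c)/(1+c)$ and $c = (1-\sum_i q_i)/(N-1)$. This equivalence runs both ways, which justifies the parenthetical ``equivalently'' in the statement. By Lemma~\ref{lem:G3}, interiority is exactly feasibility of the unconstrained minimiser. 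If it fails, the threshold formula~\eqref{eq:threshold} forces some $\widehat w_i = 0$, so the allocation is not interior.

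Second, apply the weight-gap identity~\eqref{eq:weight-gap}, which gives $w_i^* - w_j^* = (q_i - q_j)/(1+c)$ for every pair. Substituting this into the right-hand side of~\eqref{eq:pwcov}, each summand becomes $(q_i-q_j)^2/(1+c)$. The left-hand side is $Q_v(\mathrm{id}) - \bar q$ by Definition~\ref{def:delivered-quality}. Pulling out the common factor $1/(N(1+c))$ then yields~\eqref{eq:dispersion}.

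Third, establish the strict positivity. The denominator is positive because $1+c = (N - \sum_i q_i)/(N-1)$, and interiority forces each $q_i < 1$, as observed in the proof of Theorem~\ref{thm:price}(b). The numerator is positive as soon as the qualities are not all equal, which is the standing hypothesis of Corollary~\ref{lem:qual-ordering}. If all $q_i$ are equal, both sides vanish, consistent with Remark~\ref{rem:equi-ratio}.

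There is no real obstacle here. The only point needing care is justifying that the ``equilibrium allocation induced by $\mathrm{id}$'' is exactly $\mathbf w^*$ rather than a boundary threshold solution. That is precisely where the interiority hypothesis enters, through Theorem~\ref{thm:G7}(a).
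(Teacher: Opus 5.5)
Your proposal is correct and follows the paper's argument: substitute the weight gap $w_i^*-w_j^*=(q_i-q_j)/(1+c)$ from~\eqref{eq:weight-gap} into the pairwise covariance identity~\eqref{eq:pwcov}, then factor out $1/(N(1+c))$. Your additional checks are sound and supply details the paper leaves implicit: that the identity-map equilibrium of Theorem~\ref{thm:G7} is exactly $\mathbf w^*$ under~\eqref{eq:interiority}, that $1+c>0$, and that the strict inequality requires non-constant qualities as in Corollary~\ref{lem:qual-ordering}.
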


The dispersion formula~\eqref{eq:dispersion} is the exact interior
form of the stronger global comonotonicity result
(Corollary~\ref{lem:qual-ordering}); it shows that the honest gain is
proportional to the variance of the true qualities, and vanishes at
the all-equal boundary where the policies are indifferent.

\subsection{Scope}
\label{sec:distortion-scope}

\begin{remark}[What the ordering does and does not show]
\label{rem:distortion-scope}
Three limitations are worth stating explicitly.

\emph{Honest parents.}  The ordering is stated for a selector whose own
parent reads the bit honestly.  Against a parent that negates, the
parent allocates on apparent quality $1-q$, so delivering higher
true quality can reduce the allocation the selector receives.  The
statement is alignment inherited from an objectively anchored root, not
robustness against arbitrary behaviour elsewhere in the hierarchy.

\emph{Randomised garblings are not separated.}  The ordering separates
identity from negation and from the constants.  It does not separate
identity from randomised garblings of it.  Consider the channel that
passes positives and garbles negatives with probability~$b$, which maps
child quality $q_i \mapsto q'_i = b + (1{-}b)\,q_i$.  If $\tau$ is the
water-filling threshold of Theorem~\ref{thm:G7}, put
$\tau' = b + (1{-}b)\,\tau$.  Then $(q'_i - \tau')_+ = (1{-}b)(q_i-\tau)_+$
and $1 - \tau' = (1{-}b)(1-\tau)$, so
\[
    \widehat w'_i
    = \frac{(q'_i - \tau')_+}{1 - \tau'}
    = \frac{(q_i - \tau)_+}{1 - \tau}
    = \widehat w_i ,
\]
and the water-filling equation is preserved, since
$\sum_i (q'_i - \tau')_+ = (1{-}b)(1-\tau) = 1 - \tau'$.  Every channel
in this one-parameter family induces the same equilibrium allocation as
identity, delisting included.

\emph{Equilibrium response, not realised path.}  The comparison is
between equilibrium allocations of the mean flow
(Theorem~\ref{thm:G7}).  It is not a statement about the weight
realised along a single sample path of the coupled parent--child
stochastic process, and convergence of coupled strategic dynamics to
that response is not claimed.

Whether a selector adopts the protocol at all is a separate question,
and this paper does not answer it.  Adoption is an institutional
assumption throughout (Assumption~\ref{asm:adoption}).
\end{remark}

All results above require only the proportional redistribution rule,
which updates $N$ weights in $O(N)$ time per round with $O(1)$ storage
per selector beyond the weight vector itself.

%% ============================================================
\section{Discussion}
\label{sec:discussion}
%% ============================================================

\paragraph{One bit is enough.}
Taken together, the results answer the informational question of
Section~\ref{sec:intro}.  A hierarchy runs on the sign of
a weight change alone.  The bookkeeping is exact
(Theorem~\ref{thm:integrity}), the signal is lossless
(Theorem~\ref{thm:signal}), the interior operating point is unique and
the mean flow is globally stable under the stated conditions
(Theorems~\ref{thm:price} and~\ref{thm:G7}), the local update law
composes along the selected path (Theorem~\ref{thm:hierarchy}), and
the allocation response rewards undistorted transmission at a single
selector (Corollary~\ref{lem:qual-ordering}).  Minimality is not a limitation the
mechanism tolerates but the property that makes it deployable across
boundaries where richer channels cannot exist.

\paragraph{The mechanism as a market.}
The weight vector admits an economic reading as a price vector: the update is an exchange, the equilibrium is a market-clearing price that emerged from repeated exchange under C1--C4, and no result depends on this reading.
Table~\ref{tab:comparison} positions the mechanism relative to other evaluation and coordination approaches.

\begin{table}[t]
    \caption{Comparison with evaluation and coordination mechanisms.}
    \label{tab:comparison}
    \centering
    \begin{tabular}{lllll}
        \toprule
        & VCG/Auction & Scoring rules & Principal-agent & \textbf{This work} \\
        \midrule
        Feedback & Bids/reports & Probabilistic & Contracts &
          Binary, selected \\
        Hierarchy & Flat & Flat & Two-level & Arbitrary depth \\
        Coordination & Central & Central & Contract design & None \\
        Price formation & Auction & Elicitation & Incentive & Emergent \\
        \bottomrule
    \end{tabular}
\end{table}

\paragraph{Equilibrium structure.}
The replicator form of the drift (Lemma~\ref{lem:G1}) identifies the
equalisation condition.  Unrealised quality per unit of
remaining capacity is equalised across all children, that is, the
marginal cost $h_i = (1{-}p_i)/(1{-}w_i)$ is equalised to the
common multiplier $1+c$ at every active child.  This equalisation
arises from the proportional-redistribution rule and a 1-bit
information channel, without strategic interaction in the baseline
dynamics; the response to distorted transmission appears in
Section~\ref{sec:distortion}.

\paragraph{Equilibrium cost and responsiveness.}
The per-round cost is a property of the interior allocation. In the
intended operating regime ($p^* > 0.95$, $\Delta > 0.3$), the fractional
welfare loss $c_{\mathrm{eq}}/p^*$ is bounded above by $1/21 \approx 4.76\%$
(Remark~\ref{rem:eq-cost}).  It does not depend on the adjustment
rate.  Constant~$\eta$ instead controls the speed of tracking and the
stochastic dispersion around the same equilibrium.

\paragraph{No coordination required.}
The mechanism requires no explicit message passing, no shared evaluation
criteria, and no knowledge of the hierarchy's structure beyond the
parent-child relationship. Each node operates autonomously with the same
redistribution rule. Ancestors select each node's active-round
subsequence, and those nested clocks and node states need not be
statistically independent.  Every realised active round nevertheless
carries the correct signal regardless of ancestor state.

\paragraph{Pathwise scope.}
The mechanism evaluates one selected child at each selector and one
root-to-leaf path per environment round.  It does not divide a shared
outcome among simultaneous sibling contributors, infer the
counterfactual qualities of unselected children, or solve aggregate
multi-contributor credit assignment.  Those problems require a
different intervention or a richer observation model than~C2.

%% ============================================================
\section{Numerical Illustrations}
\label{sec:experiments}
%% ============================================================

We provide implementation checks and numerical illustrations on
synthetic and natural hierarchies.  The synthetic depth-scaling cases
use constant $\eta = 0.1$ and the natural-hierarchy cases use constant
$\eta = 0.05$, and all cases report means across 10 seeds.

The depth-scaling harness (Table~\ref{tab:depth-scaling}) uses two
contexts per selector, and the natural-hierarchy harness
(Table~\ref{tab:natural-hierarchy}) uses a single context per
selector. Context multiplicity is orthogonal to the tested claims,
since all results hold per-context.

\paragraph{Signal sufficiency at scale.}
We test synthetic hierarchies with branching factors
$b\in\{2,3,4\}$ over 22 branching-depth cells, up to 16,384 leaves,
comparing \emph{delta mode} (non-root nodes derive the bit only from
the selected parent-weight change) against an \emph{explicit baseline}
(all selected nodes receive the outcome).  The paired replay uses the
same random stream and a conservation-order Decimal implementation
with 300 significant digits, no signal threshold, no skipped update,
and no projection, clipping or renormalisation.

Across 1,100,000 paired traversals, all 4,150,000 comparison-derived
child bits match, all 5,250,000 local pre- and post-update vectors
match, and all 220 final state hashes match.  Absolute leaf correctness
varies from 0.244582 to 1.000000 across the finite-horizon cells, but
delta and explicit correctness are identical in every pair
(Table~\ref{tab:depth-scaling}).  A replay of the earlier float64
implementation recorded 33,932 mismatches (0.817639\%) after
boundary-adjacent changes rounded to zero or the represented state
left the simplex.  Those contaminated values are not used here.

\begin{table}[t]
    \caption{Theorem-faithful paired depth-scaling implementation
    check (5,000 rounds per seed, 10 seeds, $\eta=0.1$, 300-digit
    Decimal arithmetic).  The ratio column is delta leaf correctness
    divided by explicit leaf correctness in every cell.}
    \label{tab:depth-scaling}
    \centering
    \begin{tabular}{crrrrrr}
        \toprule
        $b$ & Depths & Pairs & Child comparisons & Mismatch & Skipped & Ratio \\
        \midrule
        2 & $2$--$8$, $10$, $12$, $15$ & 100 & 2,600,000 & 0 & 0 & 1.000 \\
        3 & $2$--$8$                    &  70 & 1,050,000 & 0 & 0 & 1.000 \\
        4 & $2$--$6$                    &  50 &   500,000 & 0 & 0 & 1.000 \\
        \midrule
        \textbf{Total} & \textbf{22 cells} & \textbf{220}
          & \textbf{4,150,000} & \textbf{0} & \textbf{0} & \textbf{1.000} \\
        \bottomrule
    \end{tabular}
\end{table}

\begin{figure}[t]
    \centering
    \includegraphics[width=\textwidth]{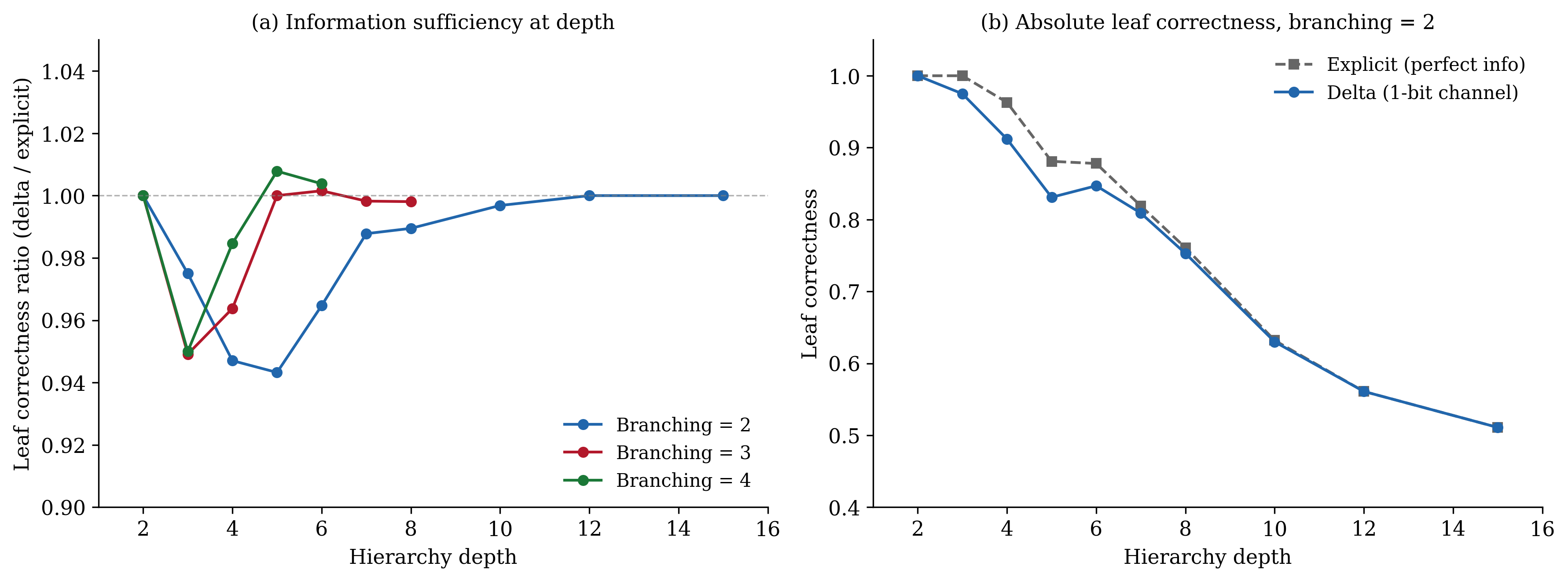}
    \caption{Delta-mode to explicit-mode leaf selection accuracy ratio
    across depths and branching factors.  The ratio is identically~1.0
    in every cell under theorem-faithful arithmetic, consistent with
    Table~\ref{tab:depth-scaling}.  The earlier float64 implementation
    (dashed) shows deviations at boundary-adjacent depths where
    sub-precision weight changes round to zero.}
    \label{fig:depth-scaling}
\end{figure}

\paragraph{Empirical settling times.}
Observed rounds to $\varepsilon$-settling, measured as the first round
after which no child weight changes by more than~$\varepsilon = 10^{-3}$
for 500 consecutive rounds, scale sub-linearly in tree size across the
tested configurations (Figure~\ref{fig:conv-time}).  This is an
empirical regularity, not a consequence of a formal global convergence
rate bound.  Deeper levels begin accumulating useful active rounds
during the transient, before ancestor weights have settled, consistent
with the composition result (Theorem~\ref{thm:hierarchy}): signal
content is correct from round~1, so early active rounds are not wasted.

\begin{figure}[t]
    \centering
    \includegraphics[width=\textwidth]{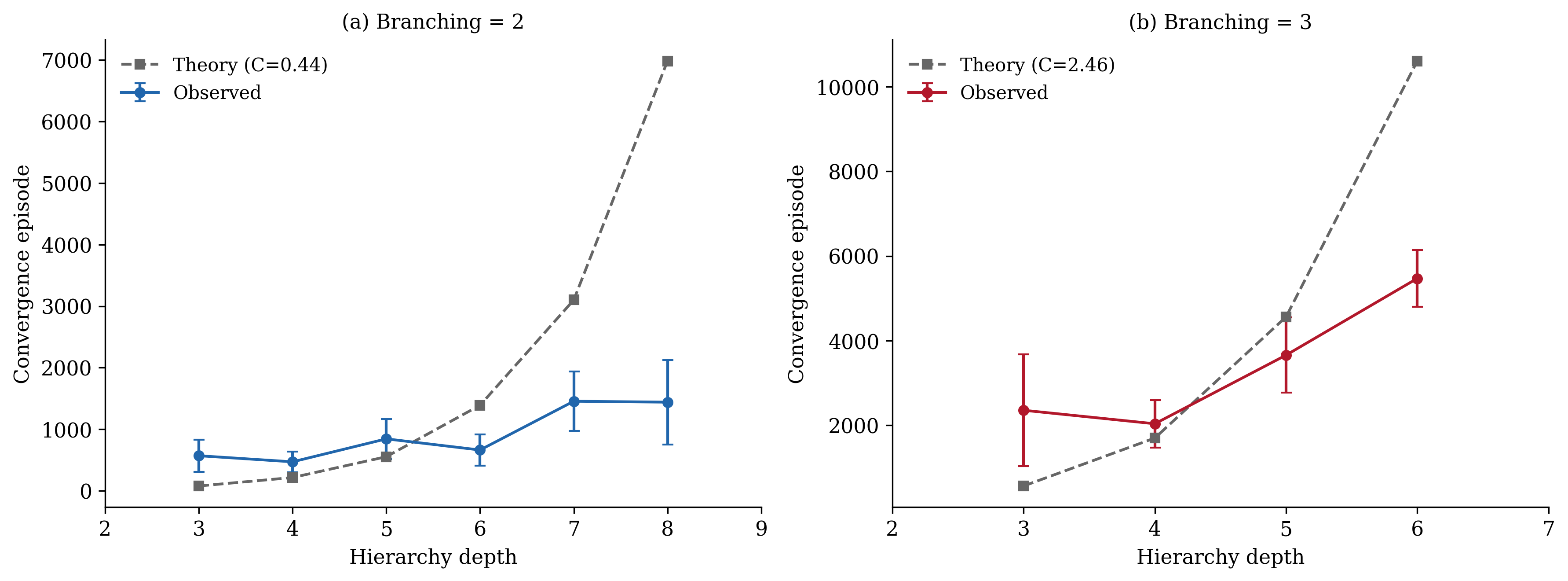}
    \caption{Observed rounds to $\varepsilon$-settling
    ($\varepsilon = 10^{-3}$) across depths and branching factors.
    Settling time scales sub-linearly in tree size for the
    configurations tested.  These are empirical measurements, not
    predictions from a formal rate bound.}
    \label{fig:conv-time}
\end{figure}

\paragraph{Natural hierarchies.}
We test three datasets with institutional hierarchies: U.S.\ Census
PUMS~\cite{ACSPUMS2022} (475~PUMAs, depth~4), PISA
2022~\cite{PISA2022} (1,567~schools, depth~4), and
S\&P~500~\cite{SP500Data2026} (397~companies, depth~3).  Each dataset runs for
50,000,000 environment rounds per seed across 10~seeds with
$\epsilon=0$ (no exploration), giving 1.5 billion environment rounds
in total.  Each round activates one root-to-leaf path and can produce
several non-root node comparisons.  Table~\ref{tab:natural-hierarchy}
therefore reports 3,402,449,134 active node comparisons, not 3.4
billion environment rounds.

The float64 field logger classifies a comparison as active only when
$|\delta|>10^{-12}$; a comparison at or below the threshold would be
skipped and excluded from the mismatch count.  In the three reported
artefacts the exclusion threshold triggers zero times: inactive comparisons are
zero at every depth, so active comparisons equal total logged
comparisons and every derived bit matches the root outcome.  This is a
check of the realised implementation traces, not evidence that the
thresholded implementation is theorem-faithful for arbitrary
trajectories and not an independent proof of the deterministic theorem.

\begin{table}[t]
    \caption{Field-scale realised signal comparison (50M environment
    rounds per seed, 10~seeds, $\epsilon=0$).  The 1-bit signal
    matched the root outcome in all 3,402,449,134 active node
    comparisons; the source threshold produced zero inactive
    comparisons in these artefacts.}
    \label{tab:natural-hierarchy}
    \centering
    \begin{tabular}{lrrrr}
        \toprule
        Dataset & Leaves & Depth & Active comps. & Mismatches \\
        \midrule
        Census PUMS  & 475   & 4 & 1,500,000,000 & 0 \\
        PISA 2022    & 1,567 & 4 &   902,449,134 & 0 \\
        S\&P 500     & 397   & 3 & 1,000,000,000 & 0 \\
        \midrule
        \textbf{Total} & \textbf{2,439} & & \textbf{3,402,449,134}
          & \textbf{0} \\
        \bottomrule
    \end{tabular}
\end{table}

\begin{figure}[t]
    \centering
    \includegraphics[width=\textwidth]{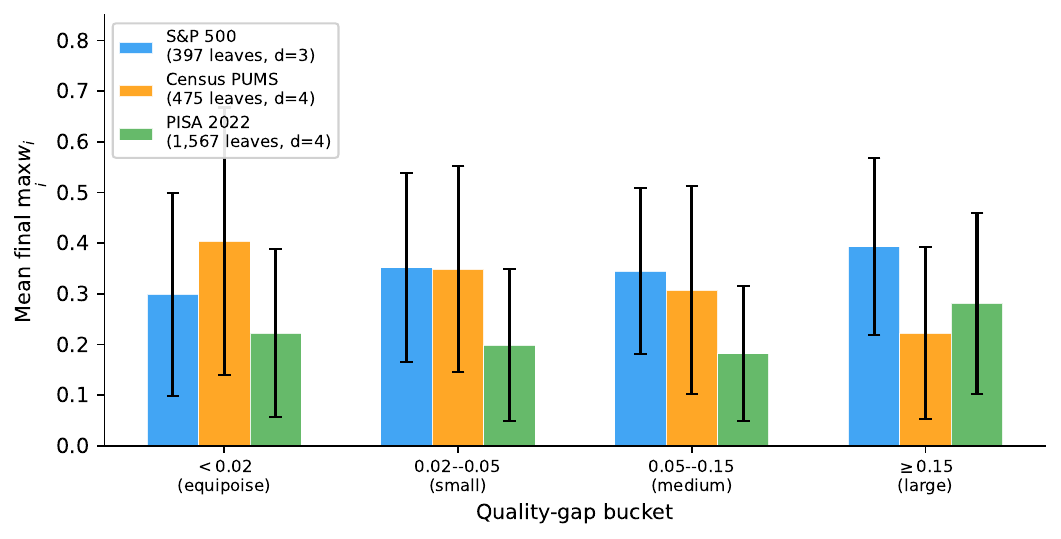}
    \caption{Equilibrium weight concentration by quality-gap bucket
    across the three natural hierarchies.  The maximum child weight
    increases monotonically with the quality gap, as predicted by
    Theorem~\ref{thm:price}.  S\&P~500 (depth~3) shows the strongest
    concentration; Census (depth~4) and PISA (depth~4, 1{,}567~leaves)
    show lower absolute concentration due to higher branching factors.}
    \label{fig:weight-concentration}
\end{figure}

%% ============================================================
\section{Conclusion}
\label{sec:conclusion}
%% ============================================================

One bit per active round is enough.  A hierarchy of autonomous
components coordinates on quality through the sign of a weight change
alone, because evaluation signals need not be communicated: they are
already embedded in the allocation each component observes.  The
mechanism that carries the bit, proportional redistribution, is
deliberately simple.  The content of the paper is what the bit
supports.

For $N{=}2$ the equilibrium is exact and closed-form, with
almost-sure stochastic convergence under decreasing steps.  For
general~$N$ the affine characterisation gives a unique interior
equilibrium under the interiority condition.  Its mean flow is
globally asymptotically stable, with local rate set by the spectral
gap.  Without interiority, the mean flow delists exactly the
below-threshold children and allocates to the survivors by the same formula.
Local updates compose without signal degradation along the single
selected path.  The one-bit interface admits exactly four
deterministic memoryless distortions, and because the allocation response
is quality-monotone, a selector that passes the bit on unchanged
delivers strictly more quality upward than one that negates it or
replaces it with a constant.  That comparison is local, and the
multi-level distortion problem is left open.

Several directions remain open.  First, quantitative rates for
global convergence.  Theorem~\ref{thm:G7} establishes global
asymptotic convergence via the potential~$\Phi$, with the local
exponential rate given by the spectral gap
(Corollary~\ref{cor:rate}).  A global exponential rate is not
claimed, and neither is a proof of $\Phi$-monotonicity for the
discrete map at all $\eta$ below an explicit threshold.
Finite-sample rates and fully stochastic general-$N$ convergence
are outside the scope of this paper; the $N{=}2$ stochastic
convergence is established unconditionally
(Theorem~\ref{thm:stochN2}).
Second, mechanism optimality: is proportional redistribution optimal
across all constant-rate mechanisms, or can alternative redistribution
rules improve the equilibrium cost?
Third, strategic agents under relaxed information constraints: if C1 is
weakened so that components can inflate observable quality at a cost,
does the equilibrium remain robust?

% ============================================================
% Bibliography
% ============================================================

\bibliographystyle{unsrtnat}
\bibliography{onebit-selection}

\end{document}